\newcommand{\appsection}[1]{\section{\MakeUppercase{#1}}}
\theoremstyle{remark}
\newtheorem{theorem}{Theorem}
\newtheorem{lemma}{Lemma}
\newtheorem{definition}{Definition}
\newtheorem{example}{Example}
\DeclareMathOperator{\Tr}{Tr}
\begin{document}
\title{Factorizability of optimal quantum sequence discrimination \\
under maximum-confidence measurements}
\author{Donghoon Ha\,\orcidlink{0000-0002-1698-9584}}
\affiliation{Department of Applied Mathematics and Institute of Natural Sciences, Kyung Hee University, Yongin 17104, Republic of Korea}
\author{Jeong San Kim\,\orcidlink{0000-0002-0420-8955}}
\email{freddie1@khu.ac.kr}
\affiliation{Department of Applied Mathematics and Institute of Natural Sciences, Kyung Hee University, Yongin 17104, Republic of Korea}
\begin{abstract}
We consider the discrimination of quantum sequences under maximum-confidence measurements and show that the optimal discrimination of a quantum sequence ensemble can always be factorized into that of each individual ensemble.
In other words, the optimal quantum sequence discrimination under maximum-confidence measurements can be achieved just by performing a maximum-confidence discrimination independently at each step of the quantum sequence.
We also show that the maximum confidence of identifying a quantum sequence is to achieve the maximum confidence of identifying each state comprising the quantum sequence.
We further provide a necessary and sufficient condition for the optimal quantum state discrimination under maximum-confidence measurements.
\end{abstract}
\maketitle

\section{Introduction}\label{sec:intro}
Quantum state discrimination is one of the fundamental concepts used in various quantum information processing tasks\cite{chef2000,berg2007,barn20091,bae2015}.
In general, we can always discriminate orthogonal quantum states using an appropriate measurement.
On the other hand, no measurement can perfectly discriminate nonorthogonal quantum states.
For these reasons, various state discrimination strategies have been developed to optimally discriminate nonorthogonal quantum states, such as minimum-error discrimination, unambiguous discrimination and maximum-confidence discrimination\cite{hels1969,ivan1987,diek1988,pere1988,crok2006}.

When a discrimination process involves multiple quantum state ensembles, the situation becomes significantly more intricate, as it requires identifying not just a single quantum state but a sequence of states where each state is independently prepared from a quantum state ensemble.  
Quantum sequence discrimination naturally appears in various quantum information tasks, including multiple-copy state discrimination and quantum change-point detection\cite{higg2011,sent2016,sent2017}. 
Thus, a deeper understanding of quantum sequence discrimination provides valuable insights into the general principles of quantum information processing.

The optimal discrimination of a quantum sequence ensemble can be factorized into that of each individual ensemble. 
In other words, the optimal discrimination of quantum sequences can be realized just by performing a quantum state discrimination independently at each step of the quantum sequence. 
This factorizability always holds in minimum-error discrimination and optimal unambiguous discrimination\cite{gupt20241,gupt20242}. 
Recently, a factorizability condition was established in the optimal discrimination of multi-party quantum sequences under local operations and classical communication\cite{ha2025}.

The factorizability of quantum sequence discrimination characterizes the fundamental limits of quantum information processing tasks such as quantum cryptography, quantum teleportation and quantum data hiding\cite{benn1986,benn1993,divi2002}. 
When a quantum information processing task based on quantum state discrimination is independently repeated multiple times, the factorizability inherent in the underlying discrimination strategy implies that quantum memory or collective measurements offer no advantage.
As there exist quantum information processing tasks based on maximum-confidence discrimination\cite{benn1992,neve2012}, it is a natural question to ask whether factorizability still holds in the optimal quantum sequence discrimination under \emph{maximum-confidence measurements}(MCMs).

Here, we provide an answer to the question by showing that the optimal discrimination of a quantum sequence ensemble under MCMs can always be factorized into that of each individual ensemble. 
In other words, the optimal quantum sequence discrimination under MCMs can be achieved just by performing a maximum-confidence discrimination independently at each step of the quantum sequence. 
We also show that the maximum confidence of identifying a quantum sequence is to achieve the maximum confidence of identifying each state in the quantum sequence. 
Moreover, we establish a necessary and sufficient condition for the optimal quantum state discrimination under MCMs.
In particular, our results naturally include the known factorizability results for optimal unambiguous discrimination as a special case when the confidence of each quantum sequence is equal to one.

This paper is organized as follows.
In Sec.~\ref{ssec:qsd}, we review the definitions and properties of the confidence of measurement in quantum state discrimination.
In Sec.~\ref{ssec:omcd}, we provide a necessary and sufficient condition for the optimal quantum state discrimination under MCMs.
In Sec.~\ref{ssec:qsqd}, we provide some properties of the confidence of measurement in quantum sequence discrimination, and show that the maximum confidence of identifying a quantum sequence is to achieve the maximum confidence of identifying each state comprising the quantum sequence.
In Sec.~\ref{ssec:fomc}, we show that the optimal discrimination of a quantum sequence ensemble under MCMs can always be factorized into that of each individual ensemble.
In Sec.~\ref{sec:disc}, we summarize our results, discuss a useful application, and suggest a direction for future research.

\section{Maximum-confidence discrimination of quantum states}\label{sec:mcst}
For a finite-dimensional complex Hilbert space $\mathcal{H}$, let $\mathbb{H}$ be the set of all Hermitian operators acting on $\mathcal{H}$. 
We also denote by $\mathbb{H}_{+}$ the set of all positive-semidefinite operators in $\mathbb{H}$, that is,
\begin{equation}\label{eq:psdf}
\mathbb{H}_{+}=\{E\in\mathbb{H}\,|\,\bra{v}E\ket{v}\geqslant0~\mbox{for all}~\ket{v}\in\mathcal{H}\}.
\end{equation}
A quantum state is described by a density operator $\rho$, that is, $\rho\in\mathbb{H}_{+}$ with $\Tr\rho=1$.
A measurement is expressed by a positive operator-valued measure $\{M_{i}\}_{i}$, that is, $\{M_{i}\}_{i}\subseteq\mathbb{H}_{+}$ with $\sum_{i}M_{i}=\mathbbm{1}$ where $\mathbbm{1}$ is the identity operator in $\mathbb{H}$.
For a quantum state $\rho$, performing a measurement $\{M_{i}\}_{i}$ gives the measurement outcome corresponding to $M_{i}$ with probability $\Tr(\rho M_{i})$.
\subsection{Quantum state discrimination and confidence of measurement}\label{ssec:qsd}
Let us consider the situation of discriminating $n$ quantum states $\rho_{1},\ldots,\rho_{n}$ where the state $\rho_{i}$ is prepared with the \emph{nonzero} probability $\eta_{i}$ for each $i=1,\ldots,n$.
We denote this situation as an ensemble 
\begin{equation}\label{eq:esdf}
\mathcal{E}=\{\eta_{i},\rho_{i}\}_{i\in\mathbb{N}_{n}}
\end{equation}
where $\mathbb{N}_{n}$ is the set of all integers from $1$ to $n$, that is,
\begin{equation}\label{eq:nndf}
\mathbb{N}_{n}=\{1,\ldots,n\}.
\end{equation}
The state of a quantum system prepared from $\mathcal{E}$ is denoted by $\rho_{0}$, that is,
\begin{equation}\label{eq:rhzd}
\rho_{0}=\sum_{i\in\mathbb{N}_{n}}\eta_{i}\rho_{i}.
\end{equation}

To guess the prepared state from the ensemble $\mathcal{E}$ in Eq.~\eqref{eq:esdf}, we apply the decision rule using a measurement
\begin{equation}\label{eq:msdf}
\mathcal{M}=\{M_{?}\}\cup\{M_{i}\}_{i\in\mathbb{N}_{n}}.
\end{equation}
For each $i\in\mathbb{N}_{n}$, obtaining the measurement outcome corresponding to $M_{i}$ leads us to guess the prepared state as $\rho_{i}$.
On the other hand, the measurement outcome corresponding to $M_{?}$ is inconclusive: ``I don't know what state is prepared.''
In this case, the average probability of correctly guessing the prepared state is
\begin{equation}\label{eq:pcor}
\sum_{i\in\mathbb{N}_{n}}\eta_{i}\Tr(\rho_{i}M_{i}).
\end{equation}

Now, we introduce the confidence of a measurement in quantum state discrimination.
\begin{definition}\label{def:comd}
For an ensemble $\mathcal{E}=\{\eta_{i},\rho_{i}\}_{i\in\mathbb{N}_{n}}$, a measurement $\mathcal{M}=\{M_{?}\}\cup\{M_{i}\}_{i\in\mathbb{N}_{n}}$ and $x\in\mathbb{N}_{n}$, the \emph{confidence} of $\mathcal{M}$ to identify $\rho_{x}$ is the conditional probability
\begin{equation}\label{eq:cfdf}
\Pr(\rho_{x}|M_{x})=\frac{\eta_{x}\Tr(\rho_{x}M_{x})}{\Tr(\rho_{0}M_{x})},
\end{equation}
which is well defined only when
\begin{equation}\label{eq:cwdc}
\Tr(\rho_{0}M_{x})>0.
\end{equation}
The \emph{maximum confidence} of measurements to identify $\rho_{x}$ is
\begin{equation}\label{eq:dfmc}
\mathcal{C}_{x}(\mathcal{E})=\max_{\substack{\mathcal{M}~\mathsf{with}\\ \Tr(\rho_{0}M_{x})>0}}\Pr(\rho_{x}|M_{x}),
\end{equation}
where the maximum is taken over all possible measurements $\mathcal{M}=\{M_{?}\}\cup\{M_{i}\}_{i\in\mathbb{N}_{n}}$ with Eq.~\eqref{eq:cwdc}.
\end{definition}

Here, we note that the confidence $\Pr(\rho_{x}|M_{x})$ in Eq.~\eqref{eq:cfdf} represents the probability that the prepared state is $\rho_{x}$ given that the measurement outcome corresponding to $M_{x}$ is obtained.
We also note that the maximum confidence $\mathcal{C}_{x}(\mathcal{E})$ in Eq.~\eqref{eq:dfmc} is known to be the largest eigenvalue of $\sqrt{\rho_{0}}^{-1}\eta_{x}\rho_{x}\sqrt{\rho_{0}}^{-1}$ where $\sqrt{\rho_{0}}$ is the positive square root of $\rho_{0}$ and $\sqrt{\rho_{0}}^{-1}$ is the inverse of $\sqrt{\rho_{0}}$ on its support\cite{crok2006,herz2012,lee2022,supp}.
From this fact, we can easily see that
\begin{align}\label{eq:cxed}
\mathcal{C}_{x}(\mathcal{E})=&\min\{q\in\mathbb{R}\,|\,q\mathbbm{1}-\sqrt{\rho_{0}}^{-1}\eta_{x}\rho_{x}\sqrt{\rho_{0}}^{-1}\in\mathbb{H}_{+}\}\nonumber\\
=&\min\{q\in\mathbb{R}\,|\,q\rho_{0}-\eta_{x}\rho_{x}\in\mathbb{H}_{+}\}
\end{align}
where $\mathbb{R}$ is the set of all real numbers.

For an ensemble $\mathcal{E}=\{\eta_{i},\rho_{i}\}_{i\in\mathbb{N}_{n}}$, the following lemma shows that the maximum confidence $\mathcal{C}_{i}(\mathcal{E})$ is bounded below by the nonzero probability $\eta_{i}$ for all $i\in\mathbb{N}_{n}$.
\begin{lemma}\label{lem:lbmc}
For an ensemble $\mathcal{E}=\{\eta_{i},\rho_{i}\}_{i\in\mathbb{N}_{n}}$ and $x\in\mathbb{N}_{n}$, we have
\begin{equation}\label{eq:lbmc}
\mathcal{C}_{x}(\mathcal{E})=\max_{\substack{E\in\mathbb{H}_{+}\\ \Tr(\rho_{0}E)=1}}\eta_{x}\Tr(\rho_{x}E)\geqslant\eta_{x},
\end{equation}
where $\rho_{0}$ is defined in Eq.~\eqref{eq:rhzd} and the maximum is taken over all possible $E\in\mathbb{H}_{+}$ with $\Tr(\rho_{0}E)=1$.
\end{lemma}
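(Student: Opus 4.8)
The plan is to establish the stated equality first and then read off the inequality as an immediate consequence. The starting point is that for any measurement $\mathcal{M}=\{M_{?}\}\cup\{M_{i}\}_{i\in\mathbb{N}_{n}}$, the confidence $\Pr(\rho_{x}|M_{x})$ in Eq.~\eqref{eq:cfdf} depends only on the single operator $M_{x}$ and not on the remaining measurement elements. The only requirements on $M_{x}$ are that it be positive semidefinite and extendable to a valid measurement; the latter holds precisely when $\mathbbm{1}-M_{x}\in\mathbb{H}_{+}$, since one may then take $M_{?}=\mathbbm{1}-M_{x}$ and $M_{i}=0$ for $i\neq x$. Thus the maximum in Eq.~\eqref{eq:dfmc} reduces to maximizing the ratio $\eta_{x}\Tr(\rho_{x}M_{x})/\Tr(\rho_{0}M_{x})$ over all $M_{x}\in\mathbb{H}_{+}$ with $M_{x}\leqslant\mathbbm{1}$ and $\Tr(\rho_{0}M_{x})>0$.

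The key observation is that this ratio is invariant under the rescaling $M_{x}\mapsto\lambda M_{x}$ for any $\lambda>0$. First I would use this to show ``$\leqslant$'': given any admissible $M_{x}$, set $E=M_{x}/\Tr(\rho_{0}M_{x})$, so that $E\in\mathbb{H}_{+}$, $\Tr(\rho_{0}E)=1$, and the confidence equals $\eta_{x}\Tr(\rho_{x}E)$, which is bounded by the right-hand maximum. For the reverse inequality ``$\geqslant$'', I would take any feasible $E$ and rescale it down to $M_{x}=\lambda E$ with $\lambda>0$ small enough that $\lambda E\leqslant\mathbbm{1}$; completing this to a measurement via $M_{?}=\mathbbm{1}-\lambda E$ and $M_{i}=0$ for $i\neq x$ yields an admissible $\mathcal{M}$ with $\Tr(\rho_{0}M_{x})=\lambda>0$ whose confidence is exactly $\eta_{x}\Tr(\rho_{x}E)$. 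Hence the constraint $M_{x}\leqslant\mathbbm{1}$ is inactive and the two optimal values coincide.

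Two points need care. To justify writing $\max$ rather than $\sup$, I would note that both the objective and the constraint $\Tr(\rho_{0}E)=1$ depend on $E$ only through its compression $PEP$ to the support of $\rho_{0}$, where $P$ is the corresponding projector; since $\eta_{i}>0$ gives $\mathrm{supp}\,\rho_{x}\subseteq\mathrm{supp}\,\rho_{0}$, one may restrict to $E=PEP$ without changing feasibility or value, and on that subspace $\{E\in\mathbb{H}_{+}\,|\,\Tr(\rho_{0}E)=1\}$ is compact, so the maximum is attained. The inequality $\mathcal{C}_{x}(\mathcal{E})\geqslant\eta_{x}$ then follows with no further work: the choice $E=\mathbbm{1}$ is feasible because $\mathbbm{1}\in\mathbb{H}_{+}$ and $\Tr(\rho_{0}\mathbbm{1})=\Tr\rho_{0}=1$, and it gives $\eta_{x}\Tr(\rho_{x}\mathbbm{1})=\eta_{x}\Tr\rho_{x}=\eta_{x}$. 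I expect the main obstacle to be the equality rather than the bound—specifically, arguing cleanly that the normalization constraint $M_{x}\leqslant\mathbbm{1}$ can be discarded via scaling, and handling attainment of the maximum when $\rho_{0}$ is rank-deficient.
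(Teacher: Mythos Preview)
Your argument is correct and takes essentially the same approach as the paper: both directions of the equality are obtained by the same rescaling between $M_{x}$ and $E$, and the paper likewise builds a measurement from an optimal $E^{\star}$ (normalizing by $\Tr E^{\star}$ rather than an unspecified small $\lambda$). The only differences are cosmetic---the paper derives $\geqslant\eta_{x}$ from the chain $\Tr E^{\star}\geqslant\eta_{x}\Tr(\rho_{x}E^{\star})\geqslant\eta_{x}\Tr(\rho_{x}\mathbbm{1})=\eta_{x}$ rather than by plugging $E=\mathbbm{1}$ directly, and your explicit compactness argument for attainment is in fact more careful than the paper, which tacitly assumes an optimizer $E^{\star}$ exists.
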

\begin{proof}
From the definition of $\mathcal{C}_{x}(\mathcal{E})$ in Eq.~\eqref{eq:dfmc}, we have
\begin{equation}\label{eq:lbpv}
\mathcal{C}_{x}(\mathcal{E})=
\max_{\substack{M,\mathbbm{1}-M\in\mathbb{H}_{+}\\
\Tr(\rho_{0}M)>0}}\frac{\eta_{x}\Tr(\rho_{x}M)}{\Tr(\rho_{0}M)}
\leqslant\max_{\substack{E\in\mathbb{H}_{+}\\ \Tr(\rho_{0}E)=1}}\eta_{x}\Tr(\rho_{x}E).
\end{equation}
Therefore, to prove the equality in \eqref{eq:lbmc}, it is enough to show that
\begin{equation}\label{eq:ubpv}
\max_{\substack{M,\mathbbm{1}-M\in\mathbb{H}_{+}\\
\Tr(\rho_{0}M)>0}}\frac{\eta_{x}\Tr(\rho_{x}M)}{\Tr(\rho_{0}M)})\geqslant\max_{\substack{E\in\mathbb{H}_{+}\\ \Tr(\rho_{0}E)=1}}\eta_{x}\Tr(\rho_{x}E).
\end{equation}

For an optimal operator $E^{\star}$ providing the maximum in the right-hand side of Inequality~\eqref{eq:ubpv}, it follows from $E^{\star}\in\mathbb{H}_{+}$ and $\Tr(\rho_{0}E^{\star})=1$ that the Hermitian operator $M=E^{\star}/\Tr E^{\star}$ satisfies
\begin{equation}\label{eq:cxbp}
M,\mathbbm{1}-M\in\mathbb{H}_{+},~\Tr(\rho_{0}M)>0,~
\frac{\eta_{x}\Tr(\rho_{x}M)}{\Tr(\rho_{0}M)}=\eta_{x}\Tr(\rho_{x}E^{\star}),
\end{equation}
which lead us to Inequality~\eqref{eq:ubpv}.
Thus the equality in \eqref{eq:lbmc} is satisfied.
Moreover, the inequality in \eqref{eq:lbmc} holds because $E=\mathbbm{1}\in\mathbb{H}_{+}$ satisfies $\Tr(\rho_{0}E)=1$ and $\Tr(\rho_{x}E)=1$.
\qedhere
\end{proof}

Recently, it was shown that Eq.~\eqref{eq:lbmc} of Lemma~\ref{lem:lbmc} can also be derived in a different way by using Eq.~\eqref{eq:cxed}\cite{lee2022}.

\subsection{Optimal quantum state discrimination under MCMs}\label{ssec:omcd}
In this subsection, we first introduce the notion of MCM and present 
the definitions and properties related to the optimal quantum state discrimination under MCMs.
\begin{definition}\label{def:mcm}
For an ensemble $\mathcal{E}=\{\eta_{i},\rho_{i}\}_{i\in\mathbb{N}_{n}}$, a measurement $\{M_{?}\}\cup\{M_{i}\}_{i\in\mathbb{N}_{n}}$ is called a \emph{MCM} of $\mathcal{E}$ if it realizes the maximum confidence $\mathcal{C}_{i}(\mathcal{E})$ in Eq.~\eqref{eq:dfmc}, that is,
\begin{equation}\label{eq:dmcm}
\frac{\eta_{i}\Tr(\rho_{i}M_{i})}{\Tr(\rho_{0}M_{i})}=\mathcal{C}_{i}(\mathcal{E}),
\end{equation}
for all $i\in\mathbb{N}_{n}$ with $\Tr(\rho_{0}M_{i})>0$, or equivalently
\begin{equation}\label{eq:emcm}
\Tr[(\mathcal{C}_{i}(\mathcal{E})\rho_{0}-\eta_{i}\rho_{i})M_{i}]=0
\end{equation}
for all $i\in\mathbb{N}_{n}$.
\end{definition}

For a given ensemble $\mathcal{E}=\{\eta_{i},\rho_{i}\}_{i\in\mathbb{N}_{n}}$, we denote the set of all MCMs of $\mathcal{E}$ as
\begin{equation}\label{eq:smcm}
\mathbb{M}(\mathcal{E})=\{\mbox{measurement}~\{M_{?}\}\cup\{M_{i}\}_{i\in\mathbb{N}_{n}}\,|
\,M_{i}\in\mathbb{M}_{i}(\mathcal{E})~\mbox{for all}~i\in\mathbb{N}_{n}\}
\end{equation}
where $\mathbb{M}_{i}(\mathcal{E})$ is the set of all positive-semidefinite operators $M_{i}$ with Eq.~\eqref{eq:emcm}, that is,
\begin{equation}\label{eq:mxed}
\mathbb{M}_{i}(\mathcal{E})=\{E\in\mathbb{H}_{+}\,|\,\Tr[(\mathcal{C}_{i}(\mathcal{E})\rho_{0}-\eta_{i}\rho_{i})E]=0\}.
\end{equation}
For each $i\in\mathbb{N}_{n}$, the dual set of $\mathbb{M}_{i}(\mathcal{E})$ is defined as
\begin{equation}\label{eq:dset}
\mathbb{M}_{i}^{*}(\mathcal{E})=\{E\in\mathbb{H}\,|\,\Tr(EF)\geqslant0~\mbox{for all}~F\in\mathbb{M}_{i}(\mathcal{E})\}.
\end{equation}
Note that $\mathbb{M}_{i}(\mathcal{E})$ and $\mathbb{M}_{i}^{*}(\mathcal{E})$ are the dual sets of each other, since $\mathbb{M}_{i}(\mathcal{E})$ is closed and convex\cite{boyd2004}.
We also note that
\begin{equation}\label{eq:inre}
\mathbb{M}_{i}(\mathcal{E})\subseteq\mathbb{H}_{+}\subseteq\mathbb{M}_{i}^{*}(\mathcal{E})\subseteq\mathbb{H},
\end{equation}
where the first and last inclusions are from the definitions of $\mathbb{M}_{i}(\mathcal{E})$ and $\mathbb{M}_{i}^{*}(\mathcal{E})$ in Eqs.~\eqref{eq:mxed} and \eqref{eq:dset}, respectively, and the second inclusion is by $\Tr(EF)\geqslant0$ for all $E,F\in\mathbb{H}_{+}$.

The following lemma shows that $\mathbb{M}_{i}(\mathcal{E})$ and $\mathbb{M}_{i}^{*}(\mathcal{E})$ admit a representation in terms of projection operators.
\begin{lemma}\label{lem:mmpt}
For an ensemble $\mathcal{E}=\{\eta_{i},\rho_{i}\}_{i\in\mathbb{N}_{n}}$ and $x\in\mathbb{N}_{n}$, we have
\begin{subequations}\label{eq:mmpt}
\begin{align}
\mathbb{M}_{x}(\mathcal{E})=&\{E\in\mathbb{H}_{+}\,|\,E\Pi_{x}(\mathcal{E})=\mathbb{O}\},\label{eq:ompt}\\
\mathbb{M}_{x}^{*}(\mathcal{E})=&\{E\in\mathbb{H}\,|\,\Pi_{x}^{\bot}(\mathcal{E})E\Pi_{x}^{\bot}(\mathcal{E})\in\mathbb{H}_{+}\}\label{eq:dmpt}
\end{align}
\end{subequations}
where $\Pi_{x}(\mathcal{E})$ and $\Pi_{x}^{\bot}(\mathcal{E})$ are the projection operators onto the support and kernel of $\mathcal{C}_{x}(\mathcal{E})\rho_{0}-\eta_{x}\rho_{x}$, respectively, and $\mathbb{O}$ is the zero operator in $\mathbb{H}$.
\end{lemma}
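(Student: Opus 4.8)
The plan is to exploit the positive-semidefiniteness of the operator $\mathcal{C}_{x}(\mathcal{E})\rho_{0}-\eta_{x}\rho_{x}$, which I abbreviate as $A$. Its positivity is immediate from the second line of Eq.~\eqref{eq:cxed}: since $\mathcal{C}_{x}(\mathcal{E})$ is the \emph{minimal} $q$ with $q\rho_{0}-\eta_{x}\rho_{x}\in\mathbb{H}_{+}$, the operator $A$ itself lies in $\mathbb{H}_{+}$. Hence $\Pi_{x}(\mathcal{E})$ projects onto the range of $A$, while $\Pi_{x}^{\bot}(\mathcal{E})=\mathbbm{1}-\Pi_{x}(\mathcal{E})$ projects onto $\ker A$, so that $\ker\Pi_{x}(\mathcal{E})=\ker A=\mathrm{range}\,\Pi_{x}^{\bot}(\mathcal{E})$.

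To establish Eq.~\eqref{eq:ompt}, I would fix $E\in\mathbb{H}_{+}$ and unpack the constraint $\Tr(AE)=0$ defining $\mathbb{M}_{x}(\mathcal{E})$. Factoring $\Tr(AE)=\Tr(\sqrt{A}E\sqrt{A})$ with $\sqrt{A}E\sqrt{A}\in\mathbb{H}_{+}$, the vanishing of the trace forces $\sqrt{A}E\sqrt{A}=\mathbb{O}$, hence $\sqrt{A}\sqrt{E}=\mathbb{O}$ and therefore $AE=\mathbb{O}$. This says $\mathrm{range}\,E\subseteq\ker A=\ker\Pi_{x}(\mathcal{E})$, equivalently $\Pi_{x}(\mathcal{E})E=\mathbb{O}$; taking the adjoint and using that $E$ and $\Pi_{x}(\mathcal{E})$ are Hermitian yields $E\Pi_{x}(\mathcal{E})=\mathbb{O}$. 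The reverse inclusion is immediate, since $E\in\mathbb{H}_{+}$ with $E\Pi_{x}(\mathcal{E})=\mathbb{O}$ gives $AE=\mathbb{O}$ and thus $\Tr(AE)=0$, placing $E$ in $\mathbb{M}_{x}(\mathcal{E})$.

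For Eq.~\eqref{eq:dmpt} I would first reread Eq.~\eqref{eq:ompt} as the statement that $\mathbb{M}_{x}(\mathcal{E})$ consists exactly of the positive-semidefinite operators supported on $\mathrm{range}\,\Pi_{x}^{\bot}(\mathcal{E})$, i.e. those of the form $F=\Pi_{x}^{\bot}(\mathcal{E})F\Pi_{x}^{\bot}(\mathcal{E})$ with $F\in\mathbb{H}_{+}$. For any $E\in\mathbb{H}$ and such an $F$, cyclicity of the trace together with $\Pi_{x}^{\bot}(\mathcal{E})^{2}=\Pi_{x}^{\bot}(\mathcal{E})$ gives $\Tr(EF)=\Tr[(\Pi_{x}^{\bot}(\mathcal{E})E\Pi_{x}^{\bot}(\mathcal{E}))F]$. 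Thus $E\in\mathbb{M}_{x}^{*}(\mathcal{E})$ holds precisely when $\Tr[(\Pi_{x}^{\bot}(\mathcal{E})E\Pi_{x}^{\bot}(\mathcal{E}))F]\geqslant0$ for every positive-semidefinite $F$ living on that subspace, and by the self-duality of the positive-semidefinite cone (applied within $\mathrm{range}\,\Pi_{x}^{\bot}(\mathcal{E})$) this is equivalent to $\Pi_{x}^{\bot}(\mathcal{E})E\Pi_{x}^{\bot}(\mathcal{E})\in\mathbb{H}_{+}$, which is Eq.~\eqref{eq:dmpt}.

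The step that demands the most care is the implication $\Tr(AE)=0\Rightarrow AE=\mathbb{O}$: it is valid only because \emph{both} $A$ and $E$ are positive semidefinite, so I would justify it through the square-root factorization above rather than by cyclicity alone. The second delicate point is the self-duality argument on the subspace: one must confirm that restricting the test operators $F$ to those of the form $\Pi_{x}^{\bot}(\mathcal{E})F\Pi_{x}^{\bot}(\mathcal{E})$ does not relax the dual condition. Rewriting $\Tr(EF)$ as $\Tr[(\Pi_{x}^{\bot}(\mathcal{E})E\Pi_{x}^{\bot}(\mathcal{E}))F]$ is exactly the move that resolves this, since it compresses $E$ onto the subspace and lets the self-duality of the positive-semidefinite cone close the argument, for which testing against rank-one $F=\ket{v}\bra{v}$ with $\ket{v}\in\mathrm{range}\,\Pi_{x}^{\bot}(\mathcal{E})$ already suffices.
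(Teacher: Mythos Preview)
Your proposal is correct and follows essentially the same route as the paper's proof. Both arguments hinge on the positivity of $A=\mathcal{C}_{x}(\mathcal{E})\rho_{0}-\eta_{x}\rho_{x}$, the passage from $\Tr(AE)=0$ to $E\Pi_{x}(\mathcal{E})=\mathbb{O}$ for positive-semidefinite $E$, the rewriting $\mathbb{M}_{x}(\mathcal{E})=\{\Pi_{x}^{\bot}(\mathcal{E})F\Pi_{x}^{\bot}(\mathcal{E})\mid F\in\mathbb{H}_{+}\}$, and the cyclicity-plus-self-duality computation for the dual cone; you are simply more explicit than the paper about justifying the step $\Tr(AE)=0\Rightarrow AE=\mathbb{O}$ via square roots, which the paper treats as a known fact about positive-semidefinite operators.
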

\begin{proof}
Since Eq.~\eqref{eq:cxed} implies $\mathcal{C}_{x}(\mathcal{E})\rho_{0}-\eta_{x}\rho_{x}\in\mathbb{H}_{+}$, it follows from the definition of $\Pi_{x}(\mathcal{E})$ that $E\in\mathbb{H}_{+}$ is orthogonal to $\mathcal{C}_{x}(\mathcal{E})\rho_{0}-\eta_{x}\rho_{x}$  if and only if it is orthogonal to $\Pi_{x}(\mathcal{E})$. 
Thus the definition of $\mathbb{M}_{x}(\mathcal{E})$ in Eq.~\eqref{eq:mxed} leads us to Eq.~\eqref{eq:ompt}.

Equation~\eqref{eq:ompt} implies
\begin{equation}\label{eq:more}
\mathbb{M}_{x}(\mathcal{E})=\{\Pi_{x}^{\bot}(\mathcal{E})F\Pi_{x}^{\bot}(\mathcal{E})\,|\,F\in\mathbb{H}_{+}\}
\end{equation}
because $\Pi_{x}^{\bot}(\mathcal{E})E\Pi_{x}^{\bot}(\mathcal{E})=E$ for all $E\in\mathbb{H}_{+}$ orthogonal to $\Pi_{x}(\mathcal{E})$ and $\Pi_{x}^{\bot}(\mathcal{E})F\Pi_{x}^{\bot}(\mathcal{E})$ is positive semidefinite and orthogonal to $\Pi_{x}(\mathcal{E})$ for all $F\in\mathbb{H}_{+}$.
From the definition of $\mathbb{M}_{x}^{*}(\mathcal{E})$ in Eq.~\eqref{eq:dset} together with Eq.~\eqref{eq:more}, we have
\begin{align}\label{eq:mdor}
\mathbb{M}_{x}^{*}(\mathcal{E})&=\{E\in\mathbb{H}\,|\,\Tr[E\Pi_{x}^{\bot}(\mathcal{E})F\Pi_{x}^{\bot}(\mathcal{E})]\geqslant0
~\mbox{for all}~F\in\mathbb{H}_{+}\}\nonumber\\
&=\{E\in\mathbb{H}\,|\,\Tr[\Pi_{x}^{\bot}(\mathcal{E})E\Pi_{x}^{\bot}(\mathcal{E})F]\geqslant0
~\mbox{for all}~F\in\mathbb{H}_{+}\}.
\end{align}
Since $G\in\mathbb{H}_{+}$ if and only if $\Tr(GF)\geqslant0$ for all $F\in\mathbb{H}_{+}$, Eq.~\eqref{eq:mdor} leads us to Eq.~\eqref{eq:dmpt}.\qedhere
\end{proof}

Now, let us consider the optimal quantum state discrimination of an ensemble $\mathcal{E}=\{\eta_{i},\rho_{i}\}_{i\in\mathbb{N}_{n}}$ under MCMs, which is to achieve the maximum success probability
\begin{equation}\label{eq:pgdf}
p_{\sf G}(\mathcal{E})=\max_{\mathcal{M}\in\mathbb{M}(\mathcal{E})}\sum_{i\in\mathbb{N}_{n}}\eta_{i}\Tr(\rho_{i}M_{i})
\end{equation}
where the maximum is taken over all possible MCMs $\mathcal{M}=\{M_{?}\}\cup\{M_{i}\}_{i\in\mathbb{N}_{n}}$ of $\mathcal{E}$ in Eq.~\eqref{eq:smcm}.
The following theorem shows that $p_{\sf G}(\mathcal{E})$ in Eq.~\eqref{eq:pgdf} admits a strictly positive lower bound.
\begin{theorem}\label{thm:pglb}
For an ensemble $\mathcal{E}=\{\eta_{i},\rho_{i}\}_{i\in\mathbb{N}_{n}}$, we have
\begin{equation}\label{eq:pglb}
p_{\sf G}(\mathcal{E})\geqslant\frac{\lambda}{n}
\end{equation}
where $\lambda$ is the smallest nonzero eigenvalue of $\rho_{0}$ in Eq.~\eqref{eq:rhzd}.
\end{theorem}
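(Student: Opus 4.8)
The plan is to exhibit a single maximum-confidence measurement whose success probability already attains $\lambda/n$; since $p_{\sf G}(\mathcal{E})$ is the maximum over all of $\mathbb{M}(\mathcal{E})$, this suffices. The building blocks are the optimizers from Lemma~\ref{lem:lbmc}: for each $x\in\mathbb{N}_{n}$ I would take $E_{x}^{\star}\in\mathbb{H}_{+}$ attaining the maximum in Eq.~\eqref{eq:lbmc}, so that $\Tr(\rho_{0}E_{x}^{\star})=1$ and $\eta_{x}\Tr(\rho_{x}E_{x}^{\star})=\mathcal{C}_{x}(\mathcal{E})$. Using $\Tr(\rho_{0}E_{x}^{\star})=1$ one immediately checks $\Tr[(\mathcal{C}_{x}(\mathcal{E})\rho_{0}-\eta_{x}\rho_{x})E_{x}^{\star}]=0$, so $E_{x}^{\star}\in\mathbb{M}_{x}(\mathcal{E})$; because $\mathbb{M}_{x}(\mathcal{E})$ is a cone, any nonnegative rescaling of $E_{x}^{\star}$ remains an admissible $x$-th effect.

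Next I would bound the size of these optimizers, and this is where the smallest \emph{nonzero} eigenvalue enters. Since $\rho_{0}\geqslant\eta_{x}\rho_{x}\geqslant\mathbb{O}$, each $\rho_{x}$ is supported on the support of $\rho_{0}$, so replacing $E_{x}^{\star}$ by $PE_{x}^{\star}P$, where $P$ is the projection onto the support of $\rho_{0}$, alters neither $\Tr(\rho_{0}E_{x}^{\star})$ nor $\eta_{x}\Tr(\rho_{x}E_{x}^{\star})$ and preserves membership in $\mathbb{M}_{x}(\mathcal{E})$ (via Lemma~\ref{lem:mmpt}, as $\Pi_{x}(\mathcal{E})\leqslant P$). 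I may therefore assume $E_{x}^{\star}\leqslant P$. On the support of $\rho_{0}$ one has $\rho_{0}\geqslant\lambda P$, whence $1=\Tr(\rho_{0}E_{x}^{\star})\geqslant\lambda\Tr(E_{x}^{\star})$, giving the uniform bound $\Tr(E_{x}^{\star})\leqslant1/\lambda$.

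With these bounds I would set $M_{x}=(\lambda/n)E_{x}^{\star}$ for each $x\in\mathbb{N}_{n}$ and $M_{?}=\mathbbm{1}-\sum_{x}M_{x}$, and the only point to verify is $M_{?}\in\mathbb{H}_{+}$. Since $\sum_{x}E_{x}^{\star}\in\mathbb{H}_{+}$, its operator norm is at most its trace, and $\Tr(\sum_{x}E_{x}^{\star})\leqslant n/\lambda$, so $\sum_{x}E_{x}^{\star}\leqslant(n/\lambda)\mathbbm{1}$ and hence $\sum_{x}M_{x}\leqslant\mathbbm{1}$. Thus $\{M_{?}\}\cup\{M_{i}\}_{i\in\mathbb{N}_{n}}$ is a genuine measurement, and each $M_{i}\in\mathbb{M}_{i}(\mathcal{E})$ by the cone property, so this measurement lies in $\mathbb{M}(\mathcal{E})$.

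Finally I would compute its success probability: $\sum_{x}\eta_{x}\Tr(\rho_{x}M_{x})=(\lambda/n)\sum_{x}\mathcal{C}_{x}(\mathcal{E})\geqslant(\lambda/n)\sum_{x}\eta_{x}=\lambda/n$, where the inequality uses $\mathcal{C}_{x}(\mathcal{E})\geqslant\eta_{x}$ from Lemma~\ref{lem:lbmc} and $\sum_{x}\eta_{x}=\Tr\rho_{0}=1$. As $p_{\sf G}(\mathcal{E})$ is the maximum over $\mathbb{M}(\mathcal{E})$, this proves Eq.~\eqref{eq:pglb}. The hard part is not the final estimate but the reduction to the support of $\rho_{0}$: it is exactly this step that makes $\lambda$ (the smallest nonzero, rather than smallest, eigenvalue) the correct quantity and lets the trace bound $\Tr(E_{x}^{\star})\leqslant1/\lambda$ go through even when $\rho_{0}$ is singular.
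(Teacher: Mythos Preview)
Your argument is correct and follows essentially the same route as the paper: take the Lemma~\ref{lem:lbmc} optimizers, compress them to the support of $\rho_{0}$, use $\rho_{0}\geqslant\lambda\,\Pi(\mathcal{E})$ to bound their traces by $1/\lambda$, and rescale to a measurement in $\mathbb{M}(\mathcal{E})$ with success probability at least $\lambda/n$ (the paper normalizes by $\sum_{j}\Tr[\Pi(\mathcal{E})E_{j}\Pi(\mathcal{E})]$ rather than the cruder $n/\lambda$, but the final estimate is identical). One slip: after replacing $E_{x}^{\star}$ by $PE_{x}^{\star}P$ you write ``assume $E_{x}^{\star}\leqslant P$,'' which is false in general---what you actually obtain, and all you subsequently use, is $PE_{x}^{\star}P=E_{x}^{\star}$, i.e.\ that $E_{x}^{\star}$ is supported on the range of $P$.
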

\begin{proof}
For each $i\in\mathbb{N}_{n}$, Lemma~\ref{lem:lbmc} ensures the existence of $E_{i}\in\mathbb{H}_{+}$ satisfying
\begin{equation}\label{eq:fite}
\Tr(\rho_{0}E_{i})=1,~
\eta_{i}\Tr(\rho_{i}E_{i})=\mathcal{C}_{i}(\mathcal{E}).
\end{equation}
By denoting the projection operator onto the support of $\rho_{0}$ as $\Pi(\mathcal{E})$, we have
\begin{eqnarray}\label{eq:desp}
1&=&\Tr(\rho_{0}E_{i})\nonumber\\
&=&\Tr[\rho_{0}\Pi(\mathcal{E})E_{i}\Pi(\mathcal{E})]\nonumber\\
&\leqslant&\Tr[\Pi(\mathcal{E})E_{i}\Pi(\mathcal{E})]\nonumber\\
&\leqslant&\Tr[\tfrac{1}{\lambda}\rho_{0}\Pi(\mathcal{E})E_{i}\Pi(\mathcal{E})]\nonumber\\
&=&\tfrac{1}{\lambda}\Tr(\rho_{0}E_{i})=\tfrac{1}{\lambda},
\end{eqnarray}
where the first and last equalities are by Eq.~\eqref{eq:fite}, the second and third equalities are from the definition of $\Pi(\mathcal{E})$, and the inequalities are due to $\Pi(\mathcal{E})-\rho_{0}\in\mathbb{H}_{+}$ and $\rho_{0}/\lambda-\Pi(\mathcal{E})\in\mathbb{H}_{+}$.

By using $E_{1},\ldots,E_{n}$ and $\Pi(\mathcal{E})$, we can define a measurement $\mathcal{M}=\{M_{?}\}\cup\{M_{i}\}_{i\in\mathbb{N}_{n}}$ with
\begin{subequations}\label{eq:mmmc}
\begin{align}
M_{?}&=\mathbbm{1}-\frac{\sum_{i'\in\mathbb{N}_{n}}\Pi(\mathcal{E})E_{i'}\Pi(\mathcal{E})}{\sum_{j\in\mathbb{N}_{n}}\Tr[\Pi(\mathcal{E})E_{j}\Pi(\mathcal{E})]},\label{eq:mmqc}\\
M_{i}&=\frac{\Pi(\mathcal{E})E_{i}\Pi(\mathcal{E})}{\sum_{j\in\mathbb{N}_{n}}\Tr[\Pi(\mathcal{E})E_{j}\Pi(\mathcal{E})]}\label{eq:mmic}
\end{align}
\end{subequations}
for all $i\in\mathbb{N}_{n}$.
This measurement qualifies as a MCM of $\mathcal{E}$ in Eq.~\eqref{eq:dmcm} because Eqs.~\eqref{eq:fite} and \eqref{eq:mmmc} together with Inequality~\eqref{eq:desp} imply
\begin{subequations}\label{eq:vcct}
\begin{align}
\Tr(\rho_{0}M_{i})&\geqslant\tfrac{\lambda}{n}>0,\label{eq:vccf}\\
\frac{\eta_{i}\Tr(\rho_{i}M_{i})}{\Tr(\rho_{0}M_{i})}&=\mathcal{C}_{i}(\mathcal{E})\label{eq:vccs}
\end{align}
\end{subequations}
for all $i\in\mathbb{N}_{n}$.

For the measurement $\mathcal{M}$ in Eq.~\eqref{eq:mmmc}, we have
\begin{eqnarray}\label{eq:plbu}
p_{\sf G}(\mathcal{E})&\geqslant&\sum_{i\in\mathbb{N}_{n}}\eta_{i}\Tr(\rho_{i}M_{i})\nonumber\\
&=&\frac{\sum_{i\in\mathbb{N}_{n}}\mathcal{C}_{i}(\mathcal{E})}{\sum_{j\in\mathbb{N}_{n}}\Tr[\Pi(\mathcal{E})E_{j}\Pi(\mathcal{E})]}\nonumber\\
&\geqslant&\frac{\sum_{i\in\mathbb{N}_{n}}\eta_{i}}{\sum_{j\in\mathbb{N}_{n}}\Tr[\Pi(\mathcal{E})E_{j}\Pi(\mathcal{E})]}\geqslant\frac{\lambda}{n},
\end{eqnarray}
where the first inequality is by the definition of $p_{\sf G}(\mathcal{E})$ in Eq.~\eqref{eq:pgdf}, the first equality is from Eqs.~\eqref{eq:fite} and \eqref{eq:mmmc} together with the definition of $\Pi(\mathcal{E})$, the second inequality is from Lemma~\ref{lem:lbmc}, and the last inequality is due to Inequality~\eqref{eq:desp}.
Thus our theorem is true.\qedhere
\end{proof}

For a given ensemble $\mathcal{E}=\{\eta_{i},\rho_{i}\}_{i\in\mathbb{N}_{n}}$, we define $q_{\sf G}(\mathcal{E})$ as the minimum quantity
\begin{equation}\label{eq:qgbd}
q_{\sf G}(\mathcal{E})=\min\Tr H
\end{equation}
over all possible positive-semidefinite operators $H$ satisfying
\begin{equation}\label{eq:dudm}
H-\eta_{i}\rho_{i}\in\mathbb{M}_{i}^{*}(\mathcal{E})
\end{equation}
for all $i\in\mathbb{N}_{n}$, where $\mathbb{M}_{i}^{*}(\mathcal{E})$ is defined in Eq.~\eqref{eq:dset}.
The feasible set of operators in Eq.~\eqref{eq:qgbd} is thus given by
\begin{equation}\label{eq:hpbd}
\mathbb{H}(\mathcal{E})=\{H\in\mathbb{H}_{+}\,|\,H-\eta_{i}\rho_{i}\in\mathbb{M}_{i}^{*}(\mathcal{E})
~\mbox{for all}~i\in\mathbb{N}_{n}\}.
\end{equation}

The following theorem shows that $q_{\sf G}(\mathcal{E})$ in Eq.~\eqref{eq:qgbd} is equal to $p_{\sf G}(\mathcal{E})$ in Eq.~\eqref{eq:pgdf}.
The proof of Theorem~\ref{thm:pqge} is given in Appendix~\ref{app:pqge}.
\begin{theorem}\label{thm:pqge}
For an ensemble $\mathcal{E}=\{\eta_{i},\rho_{i}\}_{i\in\mathbb{N}_{n}}$, we have
\begin{equation}\label{eq:pqge}
p_{\sf G}(\mathcal{E})= q_{\sf G}(\mathcal{E}).
\end{equation}
\end{theorem}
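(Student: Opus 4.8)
The plan is to read \eqref{eq:pqge} as a strong-duality statement for a pair of conic programs, to establish weak duality by a one-line operator decomposition, and then to close the duality gap through a Slater-type constraint qualification.

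First I would make the primal--dual structure explicit. The quantity $p_{\sf G}(\mathcal{E})$ in \eqref{eq:pgdf} is the optimal value of the conic program that maximizes $\sum_{i\in\mathbb{N}_{n}}\eta_{i}\Tr(\rho_{i}M_{i})$ over operators $M_{i}\in\mathbb{M}_{i}(\mathcal{E})$ subject to the single constraint $\mathbbm{1}-\sum_{i\in\mathbb{N}_{n}}M_{i}\in\mathbb{H}_{+}$, where the inconclusive element $M_{?}$ plays the role of the slack and carries no further restriction. Correspondingly, $q_{\sf G}(\mathcal{E})$ in \eqref{eq:qgbd} is the Lagrangian dual of this program: the operator $H\in\mathbb{H}_{+}$ is the multiplier for the constraint $\mathbbm{1}-\sum_{i}M_{i}\in\mathbb{H}_{+}$, and the membership $H-\eta_{i}\rho_{i}\in\mathbb{M}_{i}^{*}(\mathcal{E})$ encodes optimality in each $M_{i}$ through the dual cone \eqref{eq:dset}. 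I would then prove the two inequalities $p_{\sf G}(\mathcal{E})\leqslant q_{\sf G}(\mathcal{E})$ and $q_{\sf G}(\mathcal{E})\leqslant p_{\sf G}(\mathcal{E})$ separately.

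For weak duality, take any primal-feasible $\{M_{i}\}_{i\in\mathbb{N}_{n}}$ and any dual-feasible $H\in\mathbb{H}(\mathcal{E})$ in \eqref{eq:hpbd}, and use the decomposition
\[
\Tr H-\sum_{i\in\mathbb{N}_{n}}\eta_{i}\Tr(\rho_{i}M_{i})
=\Tr\!\Big[H\Big(\mathbbm{1}-\sum_{i\in\mathbb{N}_{n}}M_{i}\Big)\Big]+\sum_{i\in\mathbb{N}_{n}}\Tr[(H-\eta_{i}\rho_{i})M_{i}].
\]
The first term is nonnegative since $H\in\mathbb{H}_{+}$ and $\mathbbm{1}-\sum_{i}M_{i}\in\mathbb{H}_{+}$, and each summand of the second term is nonnegative since $M_{i}\in\mathbb{M}_{i}(\mathcal{E})$ while $H-\eta_{i}\rho_{i}\in\mathbb{M}_{i}^{*}(\mathcal{E})$, so the pairing is nonnegative by the very definition \eqref{eq:dset} of the dual set. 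Maximizing over primal-feasible measurements and minimizing over $H\in\mathbb{H}(\mathcal{E})$ yields $p_{\sf G}(\mathcal{E})\leqslant q_{\sf G}(\mathcal{E})$.

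For the reverse inequality I would invoke strong duality under Slater's condition for the dual (minimization) program. A strictly feasible dual point is $H=c\mathbbm{1}$ with $c$ large enough: then $c\mathbbm{1}$ is positive definite, hence in the interior of $\mathbb{H}_{+}$, and $c\mathbbm{1}-\eta_{i}\rho_{i}$ is positive definite for large $c$, so by the characterization \eqref{eq:dmpt} in Lemma~\ref{lem:mmpt} the compression $\Pi_{i}^{\bot}(\mathcal{E})(c\mathbbm{1}-\eta_{i}\rho_{i})\Pi_{i}^{\bot}(\mathcal{E})$ is positive definite on the range of $\Pi_{i}^{\bot}(\mathcal{E})$, placing $c\mathbbm{1}-\eta_{i}\rho_{i}$ in the interior of $\mathbb{M}_{i}^{*}(\mathcal{E})$. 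Since the dual is strictly feasible and bounded below (the primal is feasible, so weak duality bounds $q_{\sf G}$ from below), the conic duality theorem \cite{boyd2004} gives $p_{\sf G}(\mathcal{E})=q_{\sf G}(\mathcal{E})$, with the primal optimum attained.

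The main obstacle is the constraint qualification rather than weak duality. The primal cones $\mathbb{M}_{i}(\mathcal{E})$ are proper faces of $\mathbb{H}_{+}$---by \eqref{eq:ompt} they consist exactly of the positive-semidefinite operators annihilating $\Pi_{i}(\mathcal{E})$---so the primal feasible region has empty interior and a naive Slater condition on the primal fails. This is precisely why the argument must be routed through the dual cones $\mathbb{M}_{i}^{*}(\mathcal{E})$, and the delicate point is to certify that a positive-definite operator lies in the genuine interior, and not merely on the boundary, of each $\mathbb{M}_{i}^{*}(\mathcal{E})$; the block-positivity description \eqref{eq:dmpt} furnished by Lemma~\ref{lem:mmpt} is what makes that verification clean.
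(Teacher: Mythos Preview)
Your proof is correct and takes a genuinely different route from the paper's. Both arguments share the same weak-duality step---the decomposition you write is exactly the paper's Inequality~\eqref{eq:ppqg}---but diverge on the reverse inequality. The paper works from first principles: it builds the convex set $\mathcal{S}(\mathcal{E})$ in $\mathbb{R}\times\mathbb{H}$, separates it from the origin by the separating-hyperplane theorem, and then extracts a dual-feasible $\Gamma/\gamma$ by verifying four subclaims; crucially, the step $\gamma>0$ relies on the strict positivity $p_{\sf G}(\mathcal{E})>0$ established in Theorem~\ref{thm:pglb}. You instead package the problem as a conic primal--dual pair and close the gap via Slater's condition on the minimization side, exhibiting $H=c\mathbbm{1}$ as a strictly feasible dual point. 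Your approach is shorter and more conceptual, and it does not need Theorem~\ref{thm:pglb} at all; indeed the interior check can be made even lighter than you suggest, since $\mathbb{H}_{+}\subseteq\mathbb{M}_{i}^{*}(\mathcal{E})$ by \eqref{eq:inre} already forces any positive-definite operator into $\mathrm{int}\,\mathbb{M}_{i}^{*}(\mathcal{E})$, so Lemma~\ref{lem:mmpt} is not strictly required. The paper's approach, on the other hand, is self-contained and does not invoke an abstract conic-duality theorem as a black box---it essentially reproves the relevant instance of that theorem in place. Your diagnosis that Slater fails on the primal because each $\mathbb{M}_{i}(\mathcal{E})$ is a proper face of $\mathbb{H}_{+}$ is exactly the right observation, and routing through the dual is the correct fix.
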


For an ensemble $\mathcal{E}=\{\eta_{i},\rho_{i}\}_{i\in\mathbb{N}_{n}}$, the following theorem establishes a necessary and sufficient condition for $\mathcal{M}\in\mathbb{M}(\mathcal{E})$ and $H\in\mathbb{H}(\mathcal{E})$ to realize $p_{\sf G}(\mathcal{E})$ and $q_{\sf G}(\mathcal{E})$, respectively.
\begin{theorem}\label{thm:nspq}
For an ensemble $\mathcal{E}=\{\eta_{i},\rho_{i}\}_{i\in\mathbb{N}_{n}}$, $\mathcal{M}=\{M_{?}\}\cup\{M_{i}\}_{i\in\mathbb{N}_{n}}\in\mathbb{M}(\mathcal{E})$ and $H\in\mathbb{H}(\mathcal{E})$, $\mathcal{M}$ provides $p_{\sf G}(\mathcal{E})$ and $H$ realizes $q_{\sf G}(\mathcal{E})$ if and only if
\begin{equation}\label{eq:nspg}
\Tr(M_{?}H)=0,~
\Tr[M_{i}(H-\eta_{i}\rho_{i})]=0
\end{equation}
for all $i\in\mathbb{N}_{n}$.
\end{theorem}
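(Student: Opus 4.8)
The plan is to read Theorem~\ref{thm:nspq} as a complementary-slackness statement for the primal--dual pair $(p_{\sf G},q_{\sf G})$ and to lean on the strong duality $p_{\sf G}(\mathcal{E})=q_{\sf G}(\mathcal{E})$ already established in Theorem~\ref{thm:pqge}. The crux is a single identity that rewrites the gap between a feasible measurement $\mathcal{M}\in\mathbb{M}(\mathcal{E})$ and a feasible operator $H\in\mathbb{H}(\mathcal{E})$ as a sum of manifestly nonnegative terms. First I would invoke the completeness relation $\mathbbm{1}=M_{?}+\sum_{i\in\mathbb{N}_{n}}M_{i}$ to write $\Tr H=\Tr(M_{?}H)+\sum_{i\in\mathbb{N}_{n}}\Tr(M_{i}H)$, which rearranges into
\begin{equation}\label{eq:gapnspq}
\Tr H-\sum_{i\in\mathbb{N}_{n}}\eta_{i}\Tr(\rho_{i}M_{i})=\Tr(M_{?}H)+\sum_{i\in\mathbb{N}_{n}}\Tr[M_{i}(H-\eta_{i}\rho_{i})].
\end{equation}

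The next step is to verify that every term on the right-hand side of Eq.~\eqref{eq:gapnspq} is nonnegative. The term $\Tr(M_{?}H)$ is nonnegative because $M_{?}\in\mathbb{H}_{+}$ and $H\in\mathbb{H}_{+}$, using $\Tr(EF)\geqslant0$ for $E,F\in\mathbb{H}_{+}$. For each $i\in\mathbb{N}_{n}$, the term $\Tr[M_{i}(H-\eta_{i}\rho_{i})]$ is nonnegative because $M_{i}\in\mathbb{M}_{i}(\mathcal{E})$ by $\mathcal{M}\in\mathbb{M}(\mathcal{E})$, while $H-\eta_{i}\rho_{i}\in\mathbb{M}_{i}^{*}(\mathcal{E})$ by $H\in\mathbb{H}(\mathcal{E})$ in Eq.~\eqref{eq:hpbd}; the defining property of the dual set $\mathbb{M}_{i}^{*}(\mathcal{E})$ in Eq.~\eqref{eq:dset} then gives $\Tr[(H-\eta_{i}\rho_{i})M_{i}]\geqslant0$. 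In particular, Eq.~\eqref{eq:gapnspq} already delivers the weak-duality inequality $\sum_{i\in\mathbb{N}_{n}}\eta_{i}\Tr(\rho_{i}M_{i})\leqslant\Tr H$.

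With these ingredients both directions follow immediately. For the forward implication, if $\mathcal{M}$ attains $p_{\sf G}(\mathcal{E})$ and $H$ attains $q_{\sf G}(\mathcal{E})$, then Theorem~\ref{thm:pqge} forces $\sum_{i\in\mathbb{N}_{n}}\eta_{i}\Tr(\rho_{i}M_{i})=p_{\sf G}(\mathcal{E})=q_{\sf G}(\mathcal{E})=\Tr H$, so the left-hand side of Eq.~\eqref{eq:gapnspq} vanishes; since a sum of nonnegative terms is zero only when each term is zero, I recover Eq.~\eqref{eq:nspg}. For the converse, assuming Eq.~\eqref{eq:nspg} makes the right-hand side of Eq.~\eqref{eq:gapnspq} vanish, hence $\sum_{i\in\mathbb{N}_{n}}\eta_{i}\Tr(\rho_{i}M_{i})=\Tr H$; combining this with feasibility and strong duality in the chain $\sum_{i\in\mathbb{N}_{n}}\eta_{i}\Tr(\rho_{i}M_{i})\leqslant p_{\sf G}(\mathcal{E})=q_{\sf G}(\mathcal{E})\leqslant\Tr H=\sum_{i\in\mathbb{N}_{n}}\eta_{i}\Tr(\rho_{i}M_{i})$ forces equality throughout, so $\mathcal{M}$ attains $p_{\sf G}(\mathcal{E})$ and $H$ attains $q_{\sf G}(\mathcal{E})$.

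I expect no genuine obstacle here: the strong duality in Theorem~\ref{thm:pqge} does the heavy lifting, and the remaining content is the gap decomposition in Eq.~\eqref{eq:gapnspq} together with termwise nonnegativity. The only point requiring any care is to confirm that the dual-set inequality applies to each summand, i.e.\ that $M_{i}\in\mathbb{M}_{i}(\mathcal{E})$ and $H-\eta_{i}\rho_{i}\in\mathbb{M}_{i}^{*}(\mathcal{E})$; both are immediate from the definitions of $\mathbb{M}(\mathcal{E})$ and $\mathbb{H}(\mathcal{E})$, so this is a routine verification rather than a difficulty.
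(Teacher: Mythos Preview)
Your proposal is correct and follows essentially the same approach as the paper: the gap identity in your Eq.~\eqref{eq:gapnspq} is exactly the paper's Eq.~\eqref{eq:stmh}, the termwise nonnegativity is its Eq.~\eqref{eq:tmhp}, and your converse chain matches Eq.~\eqref{eq:qgeq}. The only cosmetic difference is that you state the gap identity once upfront and reuse it in both directions, whereas the paper unfolds it separately in each direction.
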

\begin{proof}
Let us first suppose that $\mathcal{M}$ provides $p_{\sf G}(\mathcal{E})$ and $H$ gives $q_{\sf G}(\mathcal{E})$. 
From $M_{?}\in\mathbb{H}_{+}$ and $H\in\mathbb{H}_{+}$ together with $M_{i}\in\mathbb{M}_{i}(\mathcal{E})$ and $H-\eta_{i}\rho_{i}\in\mathbb{M}_{i}^{*}(\mathcal{E})$ for all $i\in\mathbb{N}_{n}$, we have
\begin{equation}\label{eq:tmhp}
\Tr(M_{?}H)\geqslant0,~
\Tr[M_{i}(H-\eta_{i}\rho_{i})]\geqslant0
\end{equation}
for all $i\in\mathbb{N}_{n}$.
We note that
\begin{eqnarray}\label{eq:stmh}
\Tr(M_{?}H)+\sum_{i\in\mathbb{N}_{n}}\Tr[M_{i}(H-\eta_{i}\rho_{i})]
&=&\Tr H-\sum_{i\in\mathbb{N}_{n}}\eta_{i}\Tr(\rho_{i}M_{i})\nonumber\\
&=&q_{\sf G}(\mathcal{E})-p_{\sf G}(\mathcal{E})=0,
\end{eqnarray}
where the first equality is by $M_{?}+\sum_{i\in\mathbb{N}_{n}}M_{i}=\mathbbm{1}$, the second equality is from the assumption of $\mathcal{M}$ and $H$, and the last equality is from Theorem~\ref{thm:pqge}.
Inequalities~\eqref{eq:tmhp} and Eq.~\eqref{eq:stmh} lead us to Condition~\eqref{eq:nspg}.

Conversely, let us assume that $\mathcal{M}$ and $H$ satisfy Condition~\eqref{eq:nspg}. From this assumption, we have
\begin{eqnarray}\label{eq:qgeq}
q_{\sf G}(\mathcal{E})&=&p_{\sf G}(\mathcal{E})\nonumber\\
&\geqslant&\sum_{i\in\mathbb{N}_{n}}\eta_{i}\Tr(\rho_{i}M_{i})\nonumber\\
&=&\sum_{i\in\mathbb{N}_{n}}\eta_{i}\Tr(\rho_{i}M_{i})+\Tr(M_{?}H)+\sum_{i\in\mathbb{N}_{n}}\Tr[M_{i}(H-\eta_{i}\rho_{i})]\nonumber\\
&=&\Tr H\geqslant q_{\sf G}(\mathcal{E}),
\end{eqnarray}
where the first equality is from Theorem~\ref{thm:pqge}, the first inequality is from the definition of $p_{\sf G}(\mathcal{E})$ in Eq.~\eqref{eq:pgdf}, the second equality is by Condition~\eqref{eq:nspg}, the last equality is due to $M_{?}+\sum_{i\in\mathbb{N}_{n}}M_{i}=\mathbbm{1}$, and the second inequality is from the definition of $q_{\sf G}(\mathcal{E})$ in Eq.~\eqref{eq:qgbd}.
Inequality~\eqref{eq:qgeq} leads us to
\begin{equation}\label{eq:mhpq}
\sum_{i\in\mathbb{N}_{n}}\eta_{i}\Tr(\rho_{i}M_{i})=p_{\sf G}(\mathcal{E}),~
\Tr H=q_{\sf G}(\mathcal{E}).
\end{equation}
Thus $\mathcal{M}$ provides $p_{\sf G}(\mathcal{E})$ and $H$ realizes $q_{\sf G}(\mathcal{E})$.
\qedhere
\end{proof}

We also note that the definition of $q_{\sf G}(\mathcal{E})$ in Eq.~\eqref{eq:qgbd}, as well as its relationship with $p_{\sf G}(\mathcal{E})$ in Theorems~\ref{thm:pqge} and \ref{thm:nspq}, can be derived within the framework of generalized quantum state discrimination\cite{naka2015}.
In this sense, Theorems~\ref{thm:pqge} and \ref{thm:nspq} can be regarded as a special case of generalized quantum state discrimination when the available measurements are restricted to MCMs\cite{naka2015}.

The following lemma shows that the support of any Hermitian operator realizing $q_{\sf G}(\mathcal{E})$ is contained in that of $\rho_{0}$.
\begin{lemma}\label{lem:nwmh}
For an ensemble $\mathcal{E}=\{\eta_{i},\rho_{i}\}_{i\in\mathbb{N}_{n}}$ and $H\in\mathbb{H}(\mathcal{E})$ realizing $q_{\sf G}(\mathcal{E})$, we have
\begin{equation}\label{eq:hpeh}
\Pi(\mathcal{E}) H\Pi(\mathcal{E})=H
\end{equation}
where $\Pi(\mathcal{E})$ is the projection operator onto the support of $\rho_{0}$ in Eq.~\eqref{eq:rhzd}.
\end{lemma}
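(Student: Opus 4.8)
The plan is to show that compressing $H$ onto the support of $\rho_{0}$, i.e.\ replacing $H$ by $\tilde{H}=\Pi(\mathcal{E})H\Pi(\mathcal{E})$, produces a feasible operator of no larger trace, and then to invoke the minimality $\Tr H=q_{\sf G}(\mathcal{E})$. First I would record two support inclusions. Since $\rho_{0}=\sum_{j}\eta_{j}\rho_{j}$ with every $\eta_{j}>0$ and $\rho_{j}\in\mathbb{H}_{+}$, any $\ket{v}$ in the kernel of $\rho_{0}$ satisfies $\bra{v}\rho_{j}\ket{v}=0$ and hence $\rho_{j}\ket{v}=0$; thus the support of each $\rho_{i}$ lies in that of $\rho_{0}$, so $\Pi(\mathcal{E})\rho_{i}\Pi(\mathcal{E})=\rho_{i}$. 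The same computation gives $(\mathcal{C}_{i}(\mathcal{E})\rho_{0}-\eta_{i}\rho_{i})\ket{v}=\mathbb{O}$, so the kernel of $\rho_{0}$ is contained in that of $\mathcal{C}_{i}(\mathcal{E})\rho_{0}-\eta_{i}\rho_{i}$; equivalently the range of $\mathbbm{1}-\Pi(\mathcal{E})$ lies inside the range of $\Pi_{i}^{\bot}(\mathcal{E})$. Setting $R_{i}:=\Pi_{i}^{\bot}(\mathcal{E})\Pi(\mathcal{E})$, this inclusion yields the orthogonal decomposition $\Pi_{i}^{\bot}(\mathcal{E})=(\mathbbm{1}-\Pi(\mathcal{E}))+R_{i}$ in which $R_{i}$ is a projection with range inside the support of $\rho_{0}$, satisfying $\Pi(\mathcal{E})R_{i}=R_{i}=R_{i}\Pi(\mathcal{E})$, $(\mathbbm{1}-\Pi(\mathcal{E}))R_{i}=\mathbb{O}$, and $R_{i}\Pi_{i}^{\bot}(\mathcal{E})=R_{i}$.

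Next I would verify $\tilde{H}\in\mathbb{H}(\mathcal{E})$. Positive semidefiniteness of $\tilde{H}=\Pi(\mathcal{E})H\Pi(\mathcal{E})$ is immediate, and $\Pi(\mathcal{E})\rho_{i}\Pi(\mathcal{E})=\rho_{i}$ gives $\tilde{H}-\eta_{i}\rho_{i}=\Pi(\mathcal{E})(H-\eta_{i}\rho_{i})\Pi(\mathcal{E})$. By the description of the dual set in Lemma~\ref{lem:mmpt}, the required membership $\tilde{H}-\eta_{i}\rho_{i}\in\mathbb{M}_{i}^{*}(\mathcal{E})$ is exactly $\Pi_{i}^{\bot}(\mathcal{E})\Pi(\mathcal{E})(H-\eta_{i}\rho_{i})\Pi(\mathcal{E})\Pi_{i}^{\bot}(\mathcal{E})\in\mathbb{H}_{+}$, that is $R_{i}(H-\eta_{i}\rho_{i})R_{i}\in\mathbb{H}_{+}$. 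The key algebraic point is $R_{i}\Pi_{i}^{\bot}(\mathcal{E})=R_{i}$, which gives
$$R_{i}(H-\eta_{i}\rho_{i})R_{i}=R_{i}\bigl[\Pi_{i}^{\bot}(\mathcal{E})(H-\eta_{i}\rho_{i})\Pi_{i}^{\bot}(\mathcal{E})\bigr]R_{i}.$$
Since $H\in\mathbb{H}(\mathcal{E})$ means $H-\eta_{i}\rho_{i}\in\mathbb{M}_{i}^{*}(\mathcal{E})$, Lemma~\ref{lem:mmpt} gives $\Pi_{i}^{\bot}(\mathcal{E})(H-\eta_{i}\rho_{i})\Pi_{i}^{\bot}(\mathcal{E})\in\mathbb{H}_{+}$, and conjugation by $R_{i}$ preserves positivity; hence $\tilde{H}-\eta_{i}\rho_{i}\in\mathbb{M}_{i}^{*}(\mathcal{E})$ for every $i\in\mathbb{N}_{n}$, so $\tilde{H}\in\mathbb{H}(\mathcal{E})$.

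Finally I would close by optimality. Because $\tilde{H}\in\mathbb{H}(\mathcal{E})$, the definition of $q_{\sf G}(\mathcal{E})$ in Eq.~\eqref{eq:qgbd} forces $\Tr\tilde{H}\geqslant q_{\sf G}(\mathcal{E})=\Tr H$, while $\Tr H-\Tr\tilde{H}=\Tr[(\mathbbm{1}-\Pi(\mathcal{E}))H(\mathbbm{1}-\Pi(\mathcal{E}))]\geqslant0$ since $H\in\mathbb{H}_{+}$. Thus $\Tr[(\mathbbm{1}-\Pi(\mathcal{E}))H(\mathbbm{1}-\Pi(\mathcal{E}))]=0$, and a positive-semidefinite operator of vanishing trace is the zero operator, so $(\mathbbm{1}-\Pi(\mathcal{E}))H(\mathbbm{1}-\Pi(\mathcal{E}))=\mathbb{O}$; for $H\in\mathbb{H}_{+}$ this in turn forces $H(\mathbbm{1}-\Pi(\mathcal{E}))=\mathbb{O}$, giving $H=\Pi(\mathcal{E})H\Pi(\mathcal{E})$, which is Eq.~\eqref{eq:hpeh}. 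The main obstacle is the feasibility step of the second paragraph: one must control how compression by $\Pi(\mathcal{E})$ interacts with each dual cone $\mathbb{M}_{i}^{*}(\mathcal{E})$, whose description is governed by the kernel projection $\Pi_{i}^{\bot}(\mathcal{E})$ of $\mathcal{C}_{i}(\mathcal{E})\rho_{0}-\eta_{i}\rho_{i}$ rather than by $\Pi(\mathcal{E})$. The decomposition $\Pi_{i}^{\bot}(\mathcal{E})=(\mathbbm{1}-\Pi(\mathcal{E}))+R_{i}$ together with the identity $R_{i}\Pi_{i}^{\bot}(\mathcal{E})=R_{i}$ is precisely what reconciles the two projections and transfers the positivity of $\Pi_{i}^{\bot}(\mathcal{E})(H-\eta_{i}\rho_{i})\Pi_{i}^{\bot}(\mathcal{E})$ to $R_{i}(H-\eta_{i}\rho_{i})R_{i}$; the remaining steps are routine.
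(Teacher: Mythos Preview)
Your proof is correct but follows a different route from the paper's. The paper works on the \emph{primal} side: starting from an optimal maximum-confidence measurement $\mathcal{M}$, it builds a modified measurement $\mathcal{M}'$ with $M_{?}'=\Pi(\mathcal{E})M_{?}\Pi(\mathcal{E})+\Pi^{\bot}(\mathcal{E})$ and $M_{i}'=\Pi(\mathcal{E})M_{i}\Pi(\mathcal{E})$, checks that $\mathcal{M}'$ is still optimal, and then invokes the complementary-slackness condition of Theorem~\ref{thm:nspq} to obtain $\Tr(M_{?}'H)=0$, which forces $\Tr(\Pi^{\bot}(\mathcal{E})H)=0$. You instead work purely on the \emph{dual} side: you compress $H$ to $\tilde{H}=\Pi(\mathcal{E})H\Pi(\mathcal{E})$, use the projector decomposition $\Pi_{i}^{\bot}(\mathcal{E})=(\mathbbm{1}-\Pi(\mathcal{E}))+R_{i}$ together with Lemma~\ref{lem:mmpt} to show $\tilde{H}\in\mathbb{H}(\mathcal{E})$, and then appeal directly to the minimality in Eq.~\eqref{eq:qgbd}. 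Your argument is more self-contained in that it bypasses Theorem~\ref{thm:nspq} entirely and needs only Lemma~\ref{lem:mmpt}; the paper's argument is shorter once the primal--dual machinery is in place. The commutation $\Pi(\mathcal{E})\Pi_{i}(\mathcal{E})=\Pi_{i}(\mathcal{E})=\Pi_{i}(\mathcal{E})\Pi(\mathcal{E})$ (which underlies your claim that $R_{i}$ is a Hermitian projection satisfying $R_{i}\Pi_{i}^{\bot}(\mathcal{E})=R_{i}$) follows from the support inclusion you established, so the feasibility step goes through as written.
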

\begin{proof}
For a MCM $\mathcal{M}=\{M_{?}\}\cup\{M_{i}\}_{i\in\mathbb{N}_{n}}$ of $\mathcal{E}$ providing $p_{\sf G}(\mathcal{E})$, we construct another measurement $\mathcal{M}'=\{M_{?}'\}\cup\{M_{i}'\}_{i\in\mathbb{N}_{n}}$ as
\begin{equation}\label{eq:micd}
M_{?}'=\Pi(\mathcal{E}) M_{?}\Pi(\mathcal{E})+\Pi^{\bot}(\mathcal{E}),~
M_{i}'=\Pi(\mathcal{E}) M_{i}\Pi(\mathcal{E})
\end{equation}
for all $i\in\mathbb{N}_{n}$, where $\Pi^{\bot}(\mathcal{E})$ is the projection operator onto the kernel of $\rho_{0}$.
We first show that $\mathcal{M}'$ is a MCM of $\mathcal{E}$ providing $p_{\sf G}(\mathcal{E})$ and then prove Eq.~\eqref{eq:hpeh} by using this property.

For each $i\in\mathbb{N}_{n}$, we have
\begin{equation}\label{eq:mipm}
M_{i}'\in\mathbb{M}_{i}(\mathcal{E})
\end{equation}
because
\begin{eqnarray}\label{eq:crmi}
\Tr[(\mathcal{C}_{i}(\mathcal{E})\rho_{0}-\eta_{i}\rho_{i})M_{i}']
&=&\Tr[\Pi(\mathcal{E})(\mathcal{C}_{i}(\mathcal{E})\rho_{0}-\eta_{i}\rho_{i})\Pi(\mathcal{E}) M_{i}]
\nonumber\\
&=&\Tr[(\mathcal{C}_{i}(\mathcal{E})\rho_{0}-\eta_{i}\rho_{i})M_{i}]=0,
\end{eqnarray}
where the first equality is by Eq.~\eqref{eq:micd}, the second equality is from the definition of $\Pi(\mathcal{E})$, and the last equality is due to $M_{i}\in\mathbb{M}_{i}(\mathcal{E})$.
Thus $\mathcal{M}'$ is a MCM of $\mathcal{E}$.
Moreover, it provides $p_{\sf G}(\mathcal{E})$ because
\begin{align}\label{eq:mppg}
\sum_{i\in\mathbb{N}_{n}}\eta_{i}\Tr(\rho_{i}M_{i}')
&=\sum_{i\in\mathbb{N}_{n}}\eta_{i}\Tr(\rho_{i}M_{i})=p_{\sf G}(\mathcal{E}),
\end{align}
where the first equality is from Eq.~\eqref{eq:micd} together with the definition of $\Pi(\mathcal{E})$ and the second equality is from the assumption of $\mathcal{M}$.

Now, we show that Eq.~\eqref{eq:hpeh} is satisfied.
The positive-semidefinite operator $H$ is orthogonal to $\Pi^{\bot}(\mathcal{E})$ because
\begin{eqnarray}\label{eq:zqz}
0&=&\Tr(M_{?}'H)\nonumber\\
&=&\Tr\big([\Pi(\mathcal{E}) M_{?}\Pi(\mathcal{E})+\Pi^{\bot}(\mathcal{E})]H\big)\nonumber\\
&\geqslant&\Tr\big(\Pi^{\bot}(\mathcal{E})H\big)\geqslant0,
\end{eqnarray}
where the first equality is from Theorem~\ref{thm:nspq}, the second equality is by Eq.~\eqref{eq:micd}, and the inequalities are by the fact that $\Pi(\mathcal{E}) M_{?}\Pi(\mathcal{E})$, $\Pi^{\bot}(\mathcal{E})$ and $H$ are in $\mathbb{H}_{+}$.
Thus we have
\begin{equation}\label{eq:fowh}
\Pi(\mathcal{E})H\Pi(\mathcal{E})=\big[\mathbbm{1}-\Pi^{\bot}(\mathcal{E})\big]H\big[\mathbbm{1}-\Pi^{\bot}(\mathcal{E})\big]=H,
\end{equation}
where the first and second equalities are due to $\Pi(\mathcal{E})+\Pi^{\bot}(\mathcal{E})=\mathbbm{1}$ and the orthogonality between $H$ and $\Pi^{\bot}(\mathcal{E})$, respectively.\qedhere
\end{proof}

\section{Maximum-confidence discrimination of quantum sequences}\label{sec:mcse}
\subsection{Quantum sequence discrimination and confidence of measurement}\label{ssec:qsqd}
Let us consider the situation of discriminating quantum sequences, each being a sequence of quantum states prepared from $L$ quantum state ensembles
\begin{equation}\label{eq:qses}
\mathcal{E}^{1}=\{\eta_{i}^{1},\rho_{i}^{1}\}_{i\in\mathbb{N}_{n_{1}}},\ldots,
\mathcal{E}^{L}=\{\eta_{i}^{L},\rho_{i}^{L}\}_{i\in\mathbb{N}_{n_{L}}},
\end{equation}
where each ensemble $\mathcal{E}^{l}$ consists of $n_{l}$ states $\rho_{1}^{l},\ldots,\rho_{n_{l}}^{l}$ prepared with probabilities $\eta_{1}^{l},\ldots,\eta_{n_{l}}^{l}$, for $l=1,\ldots,L$.
In this situation, the preparation of a quantum sequence proceeds step by step. At the first step, a state $\rho_{c_{1}}^{1}$ is prepared from the ensemble $\mathcal{E}^{1}$ with the corresponding probability $\eta_{c_{1}}^{1}$.
At each subsequent step $l=2,\ldots,L$, a state $\rho_{c_{l}}^{l}$ is prepared from the ensemble $\mathcal{E}^{l}$ with the corresponding probability $\eta_{c_{l}}^{l}$. Consequently, the quantum sequence
\begin{equation}\label{eq:qscl}
(\rho_{c_{1}}^{1},\ldots,\rho_{c_{L}}^{L})
\end{equation}
is prepared with the probability
\begin{equation}\label{eq:etcl}
\eta_{c_{1}}^{1}\times\cdots\times\eta_{c_{L}}^{L}.
\end{equation}

By using the notion
\begin{equation}\label{eq:vecn}
\vec{n}=(n_{1},\ldots,n_{L}),
\end{equation}
we denote the Cartesian product of $\mathbb{N}_{n_{1}},\ldots,\mathbb{N}_{n_{L}}$ as
\begin{equation}\label{eq:nvcn}
\mathbb{N}_{\vec{n}}=
\{(c_{1},\ldots,c_{L})\,|\,c_{1}\in\mathbb{N}_{n_{1}},\ldots,c_{L}\in\mathbb{N}_{n_{L}}\}.
\end{equation}
The quantum sequence in Eq.~\eqref{eq:qscl}, together with its probability in Eq.~\eqref{eq:etcl}, can be considered as the tensor-producted state $\rho_{\vec{c}}$ with the joint probability $\eta_{\vec{c}}$,
\begin{equation}\label{eq:revc}
\eta_{\vec{c}}=\prod_{l=1}^{L}\eta_{c_{l}}^{l},~
\rho_{\vec{c}}=\bigotimes_{l=1}^{L}\rho_{c_{l}}^{l},
\end{equation}
for $\vec{c}=(c_{1},\ldots,c_{L})\in\mathbb{N}_{\vec{n}}$, from the tensor product of the ensembles in Eq.~\eqref{eq:qses},
\begin{equation}\label{eq:bote}
\bigotimes_{l=1}^{L}\mathcal{E}^{l}=\{\eta_{\vec{c}},\rho_{\vec{c}}\}_{\vec{c}\in\mathbb{N}_{\vec{n}}}.
\end{equation}
Figure~\ref{fig:qseb} illustrates the process of preparing a quantum sequence from the quantum sequence ensemble in Eq.~\eqref{eq:bote}.

\begin{figure}[!tt]
\centerline{\includegraphics{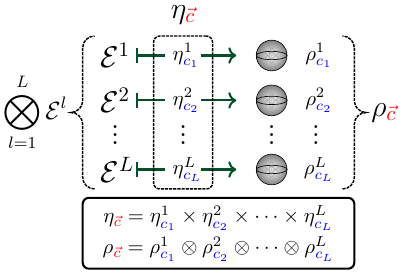}}
\caption{Quantum sequence ensemble $\bigotimes_{l=1}^{L}\mathcal{E}^{l}=\{\eta_{\vec{c}},\rho_{\vec{c}}\}_{\vec{c}\in\mathbb{N}_{\vec{n}}}$.
For each $l=1,\ldots,L$, the $l$th state $\rho_{c_{l}}^{l}$ is prepared from the ensemble $\mathcal{E}^{l}$ with the probability $\eta_{c_{l}}^{l}$, for $c_{l}=1,\ldots,n_{l}$.
The quantum sequence $(\rho_{c_{1}}^{1},\ldots,\rho_{c_{L}}^{L})$ can be regarded as the tensor producted state $\rho_{\vec{c}}$ with the probability $\eta_{\vec{c}}$, for $\vec{c}=(c_{1},\ldots,c_{L})\in\mathbb{N}_{\vec{n}}$. 
}\label{fig:qseb}
\end{figure}

The quantum state of a quantum system from the ensemble in Eq.~\eqref{eq:bote} is denoted by $\rho_{0}$, that is,
\begin{equation}\label{eq:qsqs}
\rho_{0}=\sum_{\vec{c}\in\mathbb{N}_{\vec{n}}}\eta_{\vec{c}}\rho_{\vec{c}}.
\end{equation}
From Eq.~\eqref{eq:revc}, the state $\rho_{0}$ in Eq.~\eqref{eq:qsqs} can be rewritten as
\begin{equation}\label{eq:aqsd}
\rho_{0}=\!\sum_{c_{1}\in\mathbb{N}_{n_{1}}}\!\cdots\!\sum_{c_{L}\in\mathbb{N}_{n_{L}}}
\!\eta_{c_{1}}^{1}\rho_{c_{1}}^{1}\otimes\cdots\otimes\eta_{c_{L}}^{L}\rho_{c_{L}}^{L}
=\bigotimes_{l=1}^{L}\rho_{0}^{l},
\end{equation}
where $\rho_{0}^{l}$ is the quantum state of a quantum system from the ensemble $\mathcal{E}^{l}$, as defined in Eq.~\eqref{eq:rhzd}, for $l=1,\ldots,L$.

To discriminate the quantum sequences from the ensemble $\mathcal{E}=\bigotimes_{l=1}^{L}\mathcal{E}^{l}$ in Eq.~\eqref{eq:bote}, we use a measurement
\begin{equation}\label{eq:sems}
\mathcal{M}=\{M_{?}\}\cup\{M_{\vec{c}}\}_{\vec{c}\in\mathbb{N}_{\vec{n}}}
\end{equation}
with the decision rule that the measurement outcome corresponding to $M_{\vec{c}}$ leads us to guess the prepared quantum sequence as $\rho_{\vec{c}}$, for $\vec{c}\in\mathbb{N}_{\vec{n}}$, and $M_{?}$ provides inconclusive measurement outcomes.
In this case, the average probability of correctly guessing the prepared quantum sequence from $\mathcal{E}$ is
\begin{equation}\label{eq:seqp}
\sum_{\vec{c}\in\mathbb{N}_{\vec{n}}}\eta_{\vec{c}}\Tr(\rho_{\vec{c}}M_{\vec{c}}).
\end{equation}

In particular, when a measurement $\mathcal{M}^{l}=\{M_{?}^{l}\}\cup\{M_{i}^{l}\}_{i\in\mathbb{N}_{n_{l}}}$ is independently performed at each step $l=1,\ldots,L$ of a quantum sequence from $\mathcal{E}$, the measurement operator $M_{\vec{c}}$ in Eq.~\eqref{eq:sems} can be represented as
\begin{equation}\label{eq:mcrp}
M_{\vec{c}}=M_{c_{1}}^{1}\otimes\cdots\otimes M_{c_{L}}^{L}
\end{equation}
for each $\vec{c}=(c_{1},\ldots,c_{L})\in\mathbb{N}_{\vec{n}}$; therefore, we have
\begin{equation}\label{eq:sprp}
\sum_{\vec{c}\in\mathbb{N}_{\vec{n}}}\eta_{\vec{c}}\Tr(\rho_{\vec{c}}M_{\vec{c}})=
\prod_{l=1}^{L}\sum_{i\in\mathbb{N}_{n_{l}}}\eta_{i}^{l}\Tr(\rho_{i}^{l}M_{i}^{l}).
\end{equation}
In other words, Eq.~\eqref{eq:seqp} can be expressed as the product of average success probabilities for discriminating each ensemble $\mathcal{E}^{l}$ with respect to the measurement $\mathcal{M}^{l}$, for $l=1,\ldots,L$.

Now, let us consider the confidence of a measurement $\mathcal{M}=\{M_{?}\}\cup\{M_{\vec{c}}\}_{\vec{c}\in\mathbb{N}_{\vec{n}}}$ in quantum sequence discrimination of the ensemble $\mathcal{E}=\bigotimes_{l=1}^{L}\mathcal{E}^{l}$ in Eq.~\eqref{eq:bote}. 
For each $\vec{x}\in\mathbb{N}_{n}$, the \emph{confidence} of $\mathcal{M}$ to identify $\rho_{\vec{x}}$ is the probability $\Pr(\rho_{\vec{x}}|M_{\vec{x}})$ that the prepared quantum sequence is $\rho_{\vec{x}}$ given that the measurement outcome corresponding to $M_{\vec{x}}$ is obtained, that is,
\begin{equation}\label{eq:cfds}
\Pr(\rho_{\vec{x}}|M_{\vec{x}})=\frac{\eta_{\vec{x}}\Tr(\rho_{\vec{x}}M_{\vec{x}})}{\Tr(\rho_{0}M_{\vec{x}})},
\end{equation}
which is well defined only when
\begin{equation}\label{eq:cwds}
\Tr(\rho_{0}M_{\vec{x}})>0.
\end{equation}
The \emph{maximum confidence} of measurements to identify $\rho_{\vec{x}}$ is
\begin{equation}\label{eq:dfms}
\mathcal{C}_{\vec{x}}(\mathcal{E})=\max_{\substack{\mathcal{M}~\mathsf{with}\\ \Tr(\rho_{0}M_{\vec{x}})>0}}\Pr(\rho_{\vec{x}}|M_{\vec{x}}),
\end{equation}
where the maximum is taken over all possible measurements $\mathcal{M}=\{M_{?}\}\cup\{M_{\vec{c}}\}_{\vec{c}\in\mathbb{N}_{\vec{n}}}$ with Eq.~\eqref{eq:sems}.

The following theorem shows that the maximum confidence of identifying a quantum sequence is to achieve the maximum confidence of identifying each state in the quantum sequence.
\begin{theorem}\label{thm:fmcr}
For a quantum sequence ensemble $\mathcal{E}=\bigotimes_{l=1}^{L}\mathcal{E}^{l}$ in Eq.~\eqref{eq:bote} and $\vec{x}=(x_{1},\ldots,x_{L})\in\mathbb{N}_{\vec{n}}$, we have
\begin{equation}\label{eq:fmcr}
\mathcal{C}_{\vec{x}}(\mathcal{E})=\prod_{l=1}^{L}\mathcal{C}_{x_{l}}(\mathcal{E}^{l}),
\end{equation}
where $\mathcal{C}_{x_{l}}(\mathcal{E}^{l})$ is the maximum confidence to identify $\rho_{x_{l}}^{l}$ from the ensemble $\mathcal{E}^{l}$, as defined in Eq.~\eqref{eq:dfmc}, for $l=1,\ldots,L$.
\end{theorem}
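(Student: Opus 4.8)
The plan is to establish the two inequalities $\mathcal{C}_{\vec{x}}(\mathcal{E})\leqslant\prod_{l}\mathcal{C}_{x_{l}}(\mathcal{E}^{l})$ and $\mathcal{C}_{\vec{x}}(\mathcal{E})\geqslant\prod_{l}\mathcal{C}_{x_{l}}(\mathcal{E}^{l})$ separately, exploiting the two dual characterizations of the maximum confidence already available in the excerpt: the minimization form in Eq.~\eqref{eq:cxed} and the maximization form in Lemma~\ref{lem:lbmc}. The key structural input is that both the averaged state and the weighted target factorize, namely $\rho_{0}=\bigotimes_{l}\rho_{0}^{l}$ from Eq.~\eqref{eq:aqsd} and $\eta_{\vec{x}}\rho_{\vec{x}}=\bigotimes_{l}\eta_{x_{l}}^{l}\rho_{x_{l}}^{l}$ from Eq.~\eqref{eq:revc}.

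For the upper bound, I would set $q_{l}=\mathcal{C}_{x_{l}}(\mathcal{E}^{l})$ and recall from Eq.~\eqref{eq:cxed} that $q_{l}\rho_{0}^{l}-\eta_{x_{l}}^{l}\rho_{x_{l}}^{l}\in\mathbb{H}_{+}$ for each $l$. I then want to deduce $\big(\prod_{l}q_{l}\big)\rho_{0}-\eta_{\vec{x}}\rho_{\vec{x}}\in\mathbb{H}_{+}$, since Eq.~\eqref{eq:cxed} applied to $\mathcal{E}$ would then give $\mathcal{C}_{\vec{x}}(\mathcal{E})\leqslant\prod_{l}q_{l}$. Writing $A_{l}=\eta_{x_{l}}^{l}\rho_{x_{l}}^{l}$ and $B_{l}=q_{l}\rho_{0}^{l}$, so that $A_{l},B_{l}\in\mathbb{H}_{+}$ and $B_{l}-A_{l}\in\mathbb{H}_{+}$, the required positivity follows from the telescoping identity
\[
\bigotimes_{l=1}^{L}B_{l}-\bigotimes_{l=1}^{L}A_{l}=\sum_{k=1}^{L}\Big(\bigotimes_{l<k}A_{l}\Big)\otimes(B_{k}-A_{k})\otimes\Big(\bigotimes_{l>k}B_{l}\Big),
\]
in which every summand is a tensor product of positive-semidefinite operators and hence lies in $\mathbb{H}_{+}$. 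Since $\bigotimes_{l}B_{l}=\big(\prod_{l}q_{l}\big)\rho_{0}$ and $\bigotimes_{l}A_{l}=\eta_{\vec{x}}\rho_{\vec{x}}$, this yields the claimed operator inequality.

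For the lower bound, I would invoke the maximization form in Lemma~\ref{lem:lbmc} to select, for each $l$, an operator $E_{l}\in\mathbb{H}_{+}$ with $\Tr(\rho_{0}^{l}E_{l})=1$ and $\eta_{x_{l}}^{l}\Tr(\rho_{x_{l}}^{l}E_{l})=\mathcal{C}_{x_{l}}(\mathcal{E}^{l})$. Taking $E=\bigotimes_{l}E_{l}\in\mathbb{H}_{+}$ and using multiplicativity of the trace over tensor products gives $\Tr(\rho_{0}E)=\prod_{l}\Tr(\rho_{0}^{l}E_{l})=1$ and $\eta_{\vec{x}}\Tr(\rho_{\vec{x}}E)=\prod_{l}\eta_{x_{l}}^{l}\Tr(\rho_{x_{l}}^{l}E_{l})=\prod_{l}\mathcal{C}_{x_{l}}(\mathcal{E}^{l})$. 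As $E$ is then feasible for the maximization in Lemma~\ref{lem:lbmc} applied to $\mathcal{E}$, it delivers $\mathcal{C}_{\vec{x}}(\mathcal{E})\geqslant\eta_{\vec{x}}\Tr(\rho_{\vec{x}}E)=\prod_{l}\mathcal{C}_{x_{l}}(\mathcal{E}^{l})$, which together with the upper bound proves Eq.~\eqref{eq:fmcr}.

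The main obstacle is the upper bound, which amounts to the monotonicity of the tensor product under the positive-semidefinite order ($A_{l}\preceq B_{l}$ for all $l$ implies $\bigotimes_{l}A_{l}\preceq\bigotimes_{l}B_{l}$). This is not inherited factor by factor and genuinely requires the telescoping decomposition displayed above; by contrast, the lower bound is essentially immediate once the single-step optimal operators from Lemma~\ref{lem:lbmc} are tensored together.
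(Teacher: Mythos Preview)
Your proof is correct. The overall architecture matches the paper's---both inequalities are established separately, with the lower bound obtained by a product construction and the upper bound by feeding an operator inequality into Eq.~\eqref{eq:cxed}---but the execution differs in two places. For the lower bound, the paper works directly with product measurements and the definition~\eqref{eq:dfms}, whereas you use the variational form of Lemma~\ref{lem:lbmc} and tensor the optimal $E_{l}$; these are equivalent, and yours is marginally cleaner. For the upper bound, the paper invokes Lemma~\ref{lem:wxyz}, a symmetric expansion of $\bigotimes_{l}B_{l}-\bigotimes_{l}A_{l}$ as a $2^{L-1}$-term average over parity-odd sign patterns of $\bigotimes_{l}(B_{l}\pm A_{l})$, while you use the $L$-term telescoping identity. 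Your decomposition is more elementary and fully suffices for this theorem; the paper develops Lemma~\ref{lem:wxyz} because the symmetric form is reused later (in Theorem~\ref{eq:pmie}, Lemma~\ref{lem:ppee}, and the proof of Theorem~\ref{thm:pgpg}), where it meshes with the projection structure in a way a telescoping sum would not. For Theorem~\ref{thm:fmcr} taken in isolation, your route is shorter.
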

To prove Theorem~\ref{thm:fmcr}, we employ the following lemma that provides an expression for the sum and difference between two tensor-product operators.
The proof of Lemma~\ref{lem:wxyz} is given in Appendix~\ref{app:wxyz}.
\begin{lemma}\label{lem:wxyz}
For tensor-producted operators $X=\bigotimes_{l=1}^{L}X_{l}$ and $Y=\bigotimes_{l=1}^{L}Y_{l}$, we have
\begin{equation}\label{eq:xmwy}
X+(-1)^{a}Y=\frac{1}{2^{L-1}}\sum_{\substack{\vec{b}\in\mathbb{Z}_{2}^{L}\\ \omega_{2}(\vec{b})=a}}\bigotimes_{l=1}^{L}\big[X_{l}+(-1)^{b_{l}}Y_{l}\big]
\end{equation}
for each $a\in\{0,1\}$, where $\mathbb{Z}_{2}^{L}$ is the Cartesian product of $L$ copies of $\{0,1\}$, $b_{l}$ is the $l$th entry of $\vec{b}=(b_{1},\ldots,b_{L})$, and $\omega_{2}(\vec{b})$ is the modulo 2 summation of all entries in $\vec{b}$, that is,
\begin{equation}\label{eq:omts}
\omega_{2}(\vec{b})=\sum_{l=1}^{L}b_{l}~(\mathrm{mod}~2).
\end{equation}
\end{lemma}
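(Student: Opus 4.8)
The plan is to prove the identity by fully expanding the right-hand side of \eqref{eq:xmwy} and showing that every ``mixed'' tensor monomial cancels, leaving only the terms $X$ and $Y$. First I would distribute each binomial factor $X_l+(-1)^{b_l}Y_l$ across the tensor product. Introducing a selection vector $\vec{s}=(s_1,\ldots,s_L)\in\mathbb{Z}_2^L$, where $s_l=0$ records the choice of the summand $X_l$ and $s_l=1$ records the choice of $(-1)^{b_l}Y_l$ in slot $l$, the product of the chosen signs is $(-1)^{\sum_l b_l s_l}$, so
\[
\bigotimes_{l=1}^{L}\big[X_{l}+(-1)^{b_{l}}Y_{l}\big]
=\sum_{\vec{s}\in\mathbb{Z}_{2}^{L}}(-1)^{\sum_{l}b_{l}s_{l}}\bigotimes_{l=1}^{L}Z_{l}^{(s_{l})},
\]
where $Z_l^{(0)}=X_l$ and $Z_l^{(1)}=Y_l$. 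Substituting this into the right-hand side of \eqref{eq:xmwy} and interchanging the two summations, I would collect the coefficient of each fixed tensor monomial $\bigotimes_l Z_l^{(s_l)}$; that coefficient is the character sum $S_{a}(\vec{s})=\sum_{\vec{b}:\,\omega_{2}(\vec{b})=a}(-1)^{\sum_{l}b_{l}s_{l}}$.

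The heart of the argument is the evaluation of $S_a(\vec{s})$. Using that $\tfrac{1}{2}\big[1+(-1)^{a+\sum_l b_l}\big]$ equals $1$ when $\omega_2(\vec{b})=a$ and $0$ otherwise, I would rewrite $S_a(\vec{s})$ as two unconstrained sums over all $\vec{b}\in\mathbb{Z}_2^L$, each of which factorizes across the slots. The first gives $\sum_{\vec{b}}(-1)^{\sum_l b_l s_l}=\prod_{l}\big[1+(-1)^{s_l}\big]$, which equals $2^L$ precisely when $\vec{s}=\vec{0}=(0,\ldots,0)$ and vanishes otherwise; the second gives $\sum_{\vec{b}}(-1)^{\sum_l b_l(1+s_l)}=\prod_{l}\big[1+(-1)^{1+s_l}\big]$, which equals $2^L$ precisely when $\vec{s}=\vec{1}=(1,\ldots,1)$ and vanishes otherwise. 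Hence $S_a(\vec{0})=2^{L-1}$, $S_a(\vec{1})=(-1)^a 2^{L-1}$, and $S_a(\vec{s})=0$ for every mixed $\vec{s}$.

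Only the monomials $\vec{s}=\vec{0}$ and $\vec{s}=\vec{1}$ therefore survive, contributing $2^{L-1}\bigotimes_l X_l=2^{L-1}X$ and $(-1)^a 2^{L-1}\bigotimes_l Y_l=(-1)^a 2^{L-1}Y$ respectively; dividing by $2^{L-1}$ yields exactly $X+(-1)^aY$, which is \eqref{eq:xmwy}. I expect the main obstacle to be step three, namely verifying cleanly that the character sum kills all mixed terms and isolates the all-$X$ and all-$Y$ contributions with the correct signs. As an alternative that avoids character sums, I would note that one can instead induct on $L$ by peeling off the last tensor factor: conditioning on $b_L$ turns the constraint $\omega_2(\vec{b})=a$ into $\omega_2(\vec{b}')=a\oplus b_L$ for $\vec{b}'=(b_1,\ldots,b_{L-1})$, the inductive hypothesis collapses the first $L-1$ slots to $2^{L-2}\big[X'+(-1)^{a\oplus b_L}Y'\big]$ with $X'=\bigotimes_{l=1}^{L-1}X_l$ and $Y'=\bigotimes_{l=1}^{L-1}Y_l$, and the two $b_L$ contributions combine as $[X'+(-1)^a Y']\otimes[X_L+Y_L]+[X'-(-1)^a Y']\otimes[X_L-Y_L]=2\big(X+(-1)^a Y\big)$, the cross terms cancelling, which reproduces \eqref{eq:xmwy} after the prefactor $\tfrac{1}{2}$ is absorbed.
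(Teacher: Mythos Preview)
Your proof is correct. Your primary argument---fully expanding the tensor product via selection vectors $\vec{s}$ and evaluating the resulting character sum $S_a(\vec{s})$ with the parity indicator $\tfrac{1}{2}[1+(-1)^{a+\sum_l b_l}]$---is a genuinely different route from the paper's. The paper proceeds by induction on $L$: it writes
\[
X+(-1)^a Y=\tfrac{1}{2}\big(X'+Y'\big)\otimes\big[X_L+(-1)^aY_L\big]+\tfrac{1}{2}\big(X'-Y'\big)\otimes\big[X_L-(-1)^aY_L\big],
\]
applies the inductive hypothesis to $X'\pm Y'$, and then reassembles the index set via the decomposition of $\{\vec{b}\in\mathbb{Z}_2^L:\omega_2(\vec{b})=a\}$ according to the last coordinate. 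This is essentially the alternative you sketch at the end of your proposal. Your character-sum approach is more direct and makes the cancellation mechanism transparent in one shot, at the cost of a small Fourier-type computation; the paper's induction avoids that computation but hides the cancellation inside the recursive step. Both are short and equally rigorous.
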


\begin{proof}[Proof of Theorem~\ref{thm:fmcr}]
We prove Eq.~\eqref{eq:fmcr} by showing that
\begin{subequations}\label{eq:bmce}
\begin{align}
\mathcal{C}_{\vec{x}}(\mathcal{E})\geqslant\prod_{l=1}^{L}\mathcal{C}_{x_{l}}(\mathcal{E}^{l}),
\label{eq:lmce}\\
\mathcal{C}_{\vec{x}}(\mathcal{E})\leqslant\prod_{l=1}^{L}\mathcal{C}_{x_{l}}(\mathcal{E}^{l}).
\label{eq:rmce}
\end{align}
\end{subequations}

To show Inequality~\eqref{eq:lmce}, let $\mathcal{M}^{l}=\{M_{?}^{l}\}\cup\{M_{i}^{l}\}_{i\in\mathbb{N}_{n_{l}}}$ be a measurement providing the maximum-confidence $\mathcal{C}_{x_{l}}(\mathcal{E}^{l})$ in Eq.~\eqref{eq:dfmc} for each $l=1,\ldots,L$, that is,
\begin{subequations}\label{eq:emmc}
\begin{align}
\Tr(\rho_{0}^{l}M_{x_{l}}^{l})&>0,\label{eq:emmi}\\
\frac{\eta_{x_{l}}^{l}\Tr(\rho_{x_{l}}^{l}M_{x_{l}}^{l})}{\Tr(\rho_{0}^{l}M_{x_{l}}^{l})}&=\mathcal{C}_{x_{l}}(\mathcal{E}^{l}).\label{eq:emme}
\end{align}
\end{subequations}
For the measurement $\{M_{?}\}\cup\{M_{\vec{c}}\}_{\vec{c}\in\mathbb{N}_{\vec{n}}}$ constructed as Eq.~\eqref{eq:mcrp} in terms of $\mathcal{M}^{l}$ realizing the maximum-confidence $\mathcal{C}_{x_{l}}(\mathcal{E}^{l})$ for $l=1,\ldots,L$, we have
\begin{subequations}\label{eq:vmcg}
\begin{gather}
\Tr(\rho_{0}M_{\vec{x}})=\prod_{l=1}^{L}\Tr(\rho_{0}^{l}M_{x_{l}}^{l})>0,
\label{eq:vmci}\\
\frac{\eta_{\vec{x}}\Tr(\rho_{\vec{x}}M_{\vec{x}})}{\Tr(\rho_{0}M_{\vec{x}})}=\prod_{l=1}^{L}\frac{\eta_{x_{l}}^{l}\Tr(\rho_{x_{l}}^{l}M_{x_{l}}^{l})}{\Tr(\rho_{0}^{l}M_{x_{l}}^{l})}
=\prod_{l=1}^{L}\mathcal{C}_{x_{l}}(\mathcal{E}^{l}).
\label{eq:vmce}
\end{gather}
\end{subequations}
The first equalities in Eqs.~\eqref{eq:vmci} and \eqref{eq:vmce} follow from Eqs.~\eqref{eq:revc} and \eqref{eq:mcrp}.
The inequality in Eq.~\eqref{eq:vmci} and the second equality in Eq.~\eqref{eq:vmce} are obtained from Inequality~\eqref{eq:emmi} and Eq.~\eqref{eq:emme}, respectively.
Thus the definition of $\mathcal{C}_{\vec{x}}(\mathcal{E})$ in Eq.~\eqref{eq:dfms} leads us to Inequality~\eqref{eq:lmce}.

From Lemma~\ref{lem:wxyz} together with Eqs.~\eqref{eq:revc} and \eqref{eq:aqsd}, we have
\begin{equation}\label{eq:crmr}
\left[\prod_{l=1}^{L}\mathcal{C}_{x_{l}}(\mathcal{E}^{l})\right]\rho_{0}-\eta_{\vec{x}}\rho_{\vec{x}}
=\frac{1}{2^{L-1}}\sum_{\substack{\vec{b}\in\mathbb{Z}_{2}^{L}\\ \omega_{2}(\vec{b})=1}}\bigotimes_{l=1}^{L}\Big[\mathcal{C}_{x_{l}}(\mathcal{E}^{l})\rho_{0}^{l}+(-1)^{b_{l}}\eta_{x_{l}}^{l}\rho_{x_{l}}^{l}\Big],
\end{equation}
where $b_{l}$ is the $l$th entry of $\vec{b}=(b_{1},\ldots,b_{L})$.
The right-hand side of Eq.~\eqref{eq:crmr} is positive semidefinite because Eq.~\eqref{eq:cxed} implies that $\mathcal{C}_{x_{l}}(\mathcal{E}^{l})\rho_{0}^{l}-\eta_{x_{l}}^{l}\rho_{x_{l}}^{l}$ is positive semidefinite for all $l=1,\ldots,L$.
Thus, Eq.~\eqref{eq:cxed} leads us to Inequality~\eqref{eq:rmce}.\qedhere
\end{proof}

\subsection{Optimal quantum sequence discrimination under MCMs}\label{ssec:fomc}
In this subsection, we present our main result, showing that the optimal quantum sequence discrimination under MCMs can always be achieved just by performing quantum state discrimination independently at each step of the quantum sequence.
To this end, we first introduce the notion of MCM in quantum sequence discrimination.

For a quantum sequence ensemble $\mathcal{E}=\bigotimes_{l=1}^{L}\mathcal{E}^{l}$ in Eq.~\eqref{eq:bote} and a measurement $\mathcal{M}=\{M_{?}\}\cup\{M_{\vec{c}}\}_{\vec{c}\in\mathbb{N}_{\vec{n}}}$, $\mathcal{M}$ is a \emph{MCM} of $\mathcal{E}$ if and only if it realizes the maximum confidence $\mathcal{C}_{\vec{c}}(\mathcal{E})$ in Eq.~\eqref{eq:dfms}, that is,
\begin{equation}\label{eq:mcms}
\frac{\eta_{\vec{c}}\Tr(\rho_{\vec{c}}M_{\vec{c}})}{\Tr(\rho_{0}M_{\vec{c}})}=\mathcal{C}_{\vec{c}}(\mathcal{E}),
\end{equation}
for all $\vec{c}\in\mathbb{N}_{\vec{n}}$ with $\Tr(\rho_{0}M_{\vec{c}})>0$, or equivalently
\begin{equation}\label{eq:sqcm}
\Tr[(\mathcal{C}_{\vec{c}}(\mathcal{E})\rho_{0}-\eta_{\vec{c}}\rho_{\vec{c}})M_{\vec{c}}]=0
\end{equation}
for all $\vec{c}\in\mathbb{N}_{\vec{n}}$.
Similarly to Eq.~\eqref{eq:smcm}, we denote by $\mathbb{M}(\mathcal{E})$ the set of all MCMs of $\mathcal{E}$, that is,
\begin{equation}\label{eq:qsmc}
\mathbb{M}(\mathcal{E})=\{\mbox{measurement}~\{M_{?}\}\cup\{M_{\vec{c}}\}_{\vec{c}\in\mathbb{N}_{\vec{n}}}\,|\,M_{\vec{c}}\in\mathbb{M}_{\vec{c}}(\mathcal{E})~\mbox{for all}~\vec{c}\in\mathbb{N}_{\vec{n}}\}
\end{equation}
where $\mathbb{M}_{\vec{c}}(\mathcal{E})$ is the set of all positive-semidefinite operators $M_{\vec{c}}$ with Eq.~\eqref{eq:sqcm}, that is,
\begin{equation}\label{eq:qmxe}
\mathbb{M}_{\vec{c}}(\mathcal{E})=\{E\in\mathbb{H}_{+}\,|\,\Tr[(\mathcal{C}_{\vec{c}}(\mathcal{E})\rho_{0}-\eta_{\vec{c}}\rho_{\vec{c}})E]=0\}.
\end{equation}

The following theorem shows that performing a MCM independently at each step of the quantum sequence is a MCM for the quantum sequence.
\begin{theorem}\label{eq:pmie}
When $\mathcal{M}^{l}=\{M_{?}^{l}\}\cup\{M_{i}^{l}\}_{i\in\mathbb{N}_{n_{l}}}$ is a MCM of $\mathcal{E}^{l}=\{\eta_{i}^{l},\rho_{i}^{l}\}_{i\in\mathbb{N}_{n_{l}}}$ for each $l=1,\ldots,L$, the measurement $\mathcal{M}=\{M_{?}\}\cup\{M_{\vec{c}}\}_{\vec{c}\in\mathbb{N}_{\vec{n}}}$ satisfying Eq.~\eqref{eq:mcrp} for all $\vec{c}\in\mathbb{N}_{\vec{n}}$ is a MCM of the quantum sequence ensemble $\mathcal{E}=\bigotimes_{l=1}^{L}\mathcal{E}^{l}$ in Eq.~\eqref{eq:bote}.
\end{theorem}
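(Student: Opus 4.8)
The plan is to verify directly that the tensor-product measurement $\mathcal{M}$ satisfies the defining condition of a maximum-confidence measurement of $\mathcal{E}$, namely Eq.~\eqref{eq:sqcm}: for every $\vec{c}=(c_{1},\ldots,c_{L})\in\mathbb{N}_{\vec{n}}$ one must show $M_{\vec{c}}\in\mathbb{M}_{\vec{c}}(\mathcal{E})$, i.e. $\Tr[(\mathcal{C}_{\vec{c}}(\mathcal{E})\rho_{0}-\eta_{\vec{c}}\rho_{\vec{c}})M_{\vec{c}}]=0$. Since $M_{\vec{c}}=M_{c_{1}}^{1}\otimes\cdots\otimes M_{c_{L}}^{L}$ is a tensor product of positive-semidefinite operators, we already have $M_{\vec{c}}\in\mathbb{H}_{+}$, so only the trace identity in Eq.~\eqref{eq:qmxe} remains to be checked.

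The first step is to rewrite the operator $\mathcal{C}_{\vec{c}}(\mathcal{E})\rho_{0}-\eta_{\vec{c}}\rho_{\vec{c}}$ in a product-friendly form. By Theorem~\ref{thm:fmcr} we have $\mathcal{C}_{\vec{c}}(\mathcal{E})=\prod_{l=1}^{L}\mathcal{C}_{c_{l}}(\mathcal{E}^{l})$, so this operator is precisely the left-hand side of Eq.~\eqref{eq:crmr} with $\vec{x}=\vec{c}$. Substituting the tensor-product decomposition from Eq.~\eqref{eq:crmr} and using that the trace of a tensor product factorizes, $\Tr(\bigotimes_{l}A_{l})=\prod_{l}\Tr(A_{l})$, I obtain
\begin{equation}\label{eq:pmie-factor}
\Tr[(\mathcal{C}_{\vec{c}}(\mathcal{E})\rho_{0}-\eta_{\vec{c}}\rho_{\vec{c}})M_{\vec{c}}]
=\frac{1}{2^{L-1}}\sum_{\substack{\vec{b}\in\mathbb{Z}_{2}^{L}\\ \omega_{2}(\vec{b})=1}}
\prod_{l=1}^{L}\Tr\big[(\mathcal{C}_{c_{l}}(\mathcal{E}^{l})\rho_{0}^{l}+(-1)^{b_{l}}\eta_{c_{l}}^{l}\rho_{c_{l}}^{l})M_{c_{l}}^{l}\big],
\end{equation}
where $b_{l}$ is the $l$th entry of $\vec{b}$.

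The key step is then a parity argument. Every $\vec{b}$ in the sum satisfies $\omega_{2}(\vec{b})=1$, so it has an odd number of unit entries and in particular there exists at least one index $l$ with $b_{l}=1$. For that $l$ the corresponding factor in Eq.~\eqref{eq:pmie-factor} is $\Tr[(\mathcal{C}_{c_{l}}(\mathcal{E}^{l})\rho_{0}^{l}-\eta_{c_{l}}^{l}\rho_{c_{l}}^{l})M_{c_{l}}^{l}]$, which vanishes because $\mathcal{M}^{l}$ is a maximum-confidence measurement of $\mathcal{E}^{l}$, so $M_{c_{l}}^{l}\in\mathbb{M}_{c_{l}}(\mathcal{E}^{l})$ by Definition~\ref{def:mcm} and Eq.~\eqref{eq:mxed}. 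Hence each summand contains a zero factor, the whole sum vanishes, and the left-hand side of Eq.~\eqref{eq:pmie-factor} is zero. This establishes $M_{\vec{c}}\in\mathbb{M}_{\vec{c}}(\mathcal{E})$ for every $\vec{c}\in\mathbb{N}_{\vec{n}}$, so by Eq.~\eqref{eq:qsmc} the measurement $\mathcal{M}$ is a maximum-confidence measurement of $\mathcal{E}$.

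I do not expect a genuine obstacle here: the entire argument hinges on recognizing that Eq.~\eqref{eq:crmr}, already proved en route to Theorem~\ref{thm:fmcr}, expands the relevant operator into terms each of which carries at least one single-step factor of the form $\mathcal{C}_{c_{l}}(\mathcal{E}^{l})\rho_{0}^{l}-\eta_{c_{l}}^{l}\rho_{c_{l}}^{l}$; the maximum-confidence property of $\mathcal{M}^{l}$ then annihilates that factor under the trace against $M_{c_{l}}^{l}$. The only point requiring minimal care is the factorization of the trace over the tensor product, which is immediate, and confirming that the odd-weight constraint $\omega_{2}(\vec{b})=1$ forces a unit entry, which is what makes the parity cancellation work.
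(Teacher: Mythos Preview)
Your proof is correct and follows essentially the same approach as the paper's own proof: both reduce to the expansion of $\mathcal{C}_{\vec{c}}(\mathcal{E})\rho_{0}-\eta_{\vec{c}}\rho_{\vec{c}}$ via Lemma~\ref{lem:wxyz} (the paper restates Eq.~\eqref{eq:crmr} as Eq.~\eqref{eq:crps}), factorize the trace over the tensor product, and observe that the odd-parity constraint $\omega_{2}(\vec{b})=1$ forces at least one factor $\Tr[(\mathcal{C}_{c_{l}}(\mathcal{E}^{l})\rho_{0}^{l}-\eta_{c_{l}}^{l}\rho_{c_{l}}^{l})M_{c_{l}}^{l}]=0$ because $M_{c_{l}}^{l}\in\mathbb{M}_{c_{l}}(\mathcal{E}^{l})$.
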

\begin{proof}
As $\mathcal{M}$ is obviously a measurement, it is enough to show the validity of Eq.~\eqref{eq:sqcm} for all $\vec{c}\in\mathbb{N}_{\vec{n}}$.

For each $\vec{x}=(x_{1},\ldots,x_{L})\in\mathbb{N}_{\vec{n}}$, it follows from Theorem~\ref{thm:fmcr} and Lemma~\ref{lem:wxyz} that
\begin{equation}\label{eq:crps}
\mathcal{C}_{\vec{x}}(\mathcal{E})\rho_{0}-\eta_{\vec{x}}\rho_{\vec{x}}
=\frac{1}{2^{L-1}}\sum_{\substack{\vec{b}\in\mathbb{Z}_{2}^{L}\\ \omega_{2}(\vec{b})=1}}\bigotimes_{l=1}^{L}\Big[\mathcal{C}_{x_{l}}(\mathcal{E}^{l})\rho_{0}^{l}+(-1)^{b_{l}}\eta_{x_{l}}^{l}\rho_{x_{l}}^{l}\Big],
\end{equation}
where $b_{l}$ is the $l$th entry of $\vec{b}=(b_{1},\ldots,b_{L})$.
From Eqs.~\eqref{eq:mcrp} and \eqref{eq:crps}, we have
\begin{equation}\label{eq:tcrp}
\Tr\big[\big(\mathcal{C}_{\vec{x}}(\mathcal{E})\rho_{0}-\eta_{\vec{x}}\rho_{\vec{x}}\big)M_{\vec{x}}\big]
=\frac{1}{2^{L-1}}\sum_{\substack{\vec{b}\in\mathbb{Z}_{2}^{L}\\ \omega_{2}(\vec{b})=1}}\bigotimes_{l=1}^{L}\Tr\big[\big(\mathcal{C}_{x_{l}}(\mathcal{E}^{l})\rho_{0}^{l}
+(-1)^{b_{l}}\eta_{x_{l}}^{l}\rho_{x_{l}}^{l}\big)M_{x_{l}}\big].
\end{equation}
Since every $\vec{b}\in\mathbb{Z}_{2}^{L}$ with $\omega_{2}(\vec{b})=1$ has at least one entry equal to $1$ and $M_{x_{l}}^{l}$ is an element of $\mathbb{M}_{x_{l}}(\mathcal{E}^{l})$ defined in Eq.~\eqref{eq:mxed} for all $l=1,\ldots,L$, the right-hand side of Eq.~\eqref{eq:tcrp} vanishes; therefore, Eq.~\eqref{eq:sqcm} holds.\qedhere
\end{proof}

Now, let us consider the optimal discrimination of a quantum sequence ensemble $\mathcal{E}=\bigotimes_{l=1}^{L}\mathcal{E}^{l}$ under MCMs to achieve
\begin{equation}\label{eq:pgsd}
p_{\sf G}(\mathcal{E})=\max_{\mathcal{M}\in\mathbb{M}(\mathcal{E})}\sum_{\vec{c}\in\mathbb{N}_{\vec{n}}}\eta_{\vec{c}}\Tr(\rho_{\vec{c}}M_{\vec{c}})
\end{equation}
where the maximum is taken over all possible MCMs $\mathcal{M}=\{M_{?}\}\cup\{M_{\vec{c}}\}_{\vec{c}\in\mathbb{N}_{\vec{n}}}$ of $\mathcal{E}$ in Eq.~\eqref{eq:qsmc}.
From Theorem~\ref{eq:pmie} together with Eq.~\eqref{eq:sprp}, $p_{\sf G}(\mathcal{E})$ in Eq.~\eqref{eq:pgsd} is bounded below by the product of the product of the maximum success probabilities $p_{\sf G}(\mathcal{E}^{l})$ for discriminating each ensemble $\mathcal{E}^{l}$, for $l=1,\ldots,L$, that is,
\begin{equation}\label{eq:pgpp}
p_{\sf G}(\mathcal{E})\geqslant\prod_{l=1}^{L}p_{\sf G}(\mathcal{E}^{l}).
\end{equation}

The following theorem shows that Inequality~\eqref{eq:pgpp} is saturated.
In other words, the optimal quantum sequence discrimination in Eq.~\eqref{eq:pgsd} can always be factorized into the optimal quantum state discrimination for each individual ensemble.
\begin{theorem}\label{thm:pgpg}
For a quantum sequence ensemble $\mathcal{E}=\bigotimes_{l=1}^{L}\mathcal{E}^{l}$ in Eq.~\eqref{eq:bote}, we have
\begin{equation}\label{eq:pepp}
p_{\sf G}(\mathcal{E})=\prod_{l=1}^{L}p_{\sf G}(\mathcal{E}^{l}).
\end{equation}
\end{theorem}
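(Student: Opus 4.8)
The plan is to establish the reverse of Inequality~\eqref{eq:pgpp}, namely $p_{\sf G}(\mathcal{E})\leqslant\prod_{l=1}^{L}p_{\sf G}(\mathcal{E}^{l})$, by exhibiting a single dual feasible operator whose trace equals the product. By Theorem~\ref{thm:pqge} I may work with $q_{\sf G}$ in place of $p_{\sf G}$, so I would pick, for each $l$, an optimal dual operator $H^{l}\in\mathbb{H}(\mathcal{E}^{l})$ with $\Tr H^{l}=q_{\sf G}(\mathcal{E}^{l})=p_{\sf G}(\mathcal{E}^{l})$, and set $H=\bigotimes_{l=1}^{L}H^{l}$. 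Then $H\in\mathbb{H}_{+}$ and $\Tr H=\prod_{l}\Tr H^{l}=\prod_{l}p_{\sf G}(\mathcal{E}^{l})$ automatically, so once I verify $H\in\mathbb{H}(\mathcal{E})$ the definition of $q_{\sf G}(\mathcal{E})$ in Eq.~\eqref{eq:qgbd} gives $p_{\sf G}(\mathcal{E})=q_{\sf G}(\mathcal{E})\leqslant\Tr H=\prod_{l}p_{\sf G}(\mathcal{E}^{l})$, finishing the proof. Equivalently, tensoring the optimal maximum-confidence measurements into $\mathcal{M}\in\mathbb{M}(\mathcal{E})$ via Theorem~\ref{eq:pmie}, the complementary-slackness identities $\Tr(M_{?}H)=0$ and $\Tr[M_{\vec{c}}(H-\eta_{\vec{c}}\rho_{\vec{c}})]=0$ follow at once by factorizing each trace over the tensor factors and using the single-system slackness of Theorem~\ref{thm:nspq}, so Theorem~\ref{thm:nspq} upgrades feasibility to joint optimality.

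The entire difficulty therefore concentrates in the claim $H-\eta_{\vec{c}}\rho_{\vec{c}}\in\mathbb{M}_{\vec{c}}^{*}(\mathcal{E})$ for every $\vec{c}$. Writing $P_{l}=\mathcal{C}_{x_{l}}(\mathcal{E}^{l})\rho_{0}^{l}$ and $Q_{l}=\eta_{x_{l}}^{l}\rho_{x_{l}}^{l}$ for a fixed $\vec{c}=\vec{x}$, Theorem~\ref{thm:fmcr} together with Eqs.~\eqref{eq:revc} and \eqref{eq:aqsd} gives $\mathcal{C}_{\vec{x}}(\mathcal{E})\rho_{0}-\eta_{\vec{x}}\rho_{\vec{x}}=\bigotimes_{l}P_{l}-\bigotimes_{l}Q_{l}$, with each $A_{l}=P_{l}-Q_{l}\in\mathbb{H}_{+}$ by Eq.~\eqref{eq:cxed}. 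By the dual characterization in Lemma~\ref{lem:mmpt}, the claim is precisely that the quadratic form of $\bigotimes_{l}H^{l}-\bigotimes_{l}Q_{l}$ is nonnegative on the kernel of $\bigotimes_{l}P_{l}-\bigotimes_{l}Q_{l}$. Invoking Lemma~\ref{lem:nwmh} to place the support of each $H^{l}$ inside that of $\rho_{0}^{l}$, both $\bigotimes_{l}H^{l}$ and $\bigotimes_{l}Q_{l}$ are supported on the support of $\rho_{0}=\bigotimes_{l}\rho_{0}^{l}$, so it suffices to test vectors $\ket{\psi}$ lying in both this support and the kernel above.

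The main obstacle, and the one genuinely new ingredient, is to show that this kernel factorizes: restricted to the support of $\rho_{0}$, the kernel of $\bigotimes_{l}P_{l}-\bigotimes_{l}Q_{l}$ equals $\bigotimes_{l}V_{l}$, where $V_{l}$ is the kernel of $A_{l}$ intersected with the support of $\rho_{0}^{l}$. I would prove this through the change of variables $\widetilde{Q}_{l}=P_{l}^{-1/2}Q_{l}P_{l}^{-1/2}$ on the support of $\rho_{0}^{l}$, under which $0\leqslant\widetilde{Q}_{l}\leqslant\mathbbm{1}$ (because $A_{l}\in\mathbb{H}_{+}$) and the constraint $(\bigotimes_{l}P_{l}-\bigotimes_{l}Q_{l})\ket{\psi}=0$ becomes $(\bigotimes_{l}\widetilde{Q}_{l})\ket{\phi}=\ket{\phi}$ for $\ket{\phi}=(\bigotimes_{l}P_{l}^{1/2})\ket{\psi}$. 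Since a tensor product of contractions attains eigenvalue $1$ only on the tensor product of the individual eigenvalue-$1$ eigenspaces, $\ket{\phi}$ lies in $\bigotimes_{l}E_{1}(\widetilde{Q}_{l})$, and transforming back gives $\ket{\psi}\in\bigotimes_{l}V_{l}$ because $P_{l}^{-1/2}E_{1}(\widetilde{Q}_{l})=V_{l}$.

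With the factorization in hand the conclusion is routine. On $V_{l}\subseteq\ker A_{l}$ the single-system feasibility $\Pi_{x_{l}}^{\bot}(\mathcal{E}^{l})(H^{l}-Q_{l})\Pi_{x_{l}}^{\bot}(\mathcal{E}^{l})\in\mathbb{H}_{+}$ (again Lemma~\ref{lem:mmpt}) shows that the compressions $\widehat{H}_{l}$ and $\widehat{Q}_{l}$ of $H^{l}$ and $Q_{l}$ to $V_{l}$ satisfy $0\leqslant\widehat{Q}_{l}\leqslant\widehat{H}_{l}$. A telescoping identity then gives $\bigotimes_{l}\widehat{H}_{l}-\bigotimes_{l}\widehat{Q}_{l}\in\mathbb{H}_{+}$, so for every $\ket{\psi}\in\bigotimes_{l}V_{l}$ we get $\bra{\psi}(\bigotimes_{l}H^{l}-\bigotimes_{l}Q_{l})\ket{\psi}=\bra{\psi}(\bigotimes_{l}\widehat{H}_{l}-\bigotimes_{l}\widehat{Q}_{l})\ket{\psi}\geqslant0$. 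This is the desired positivity on the kernel, which establishes $H\in\mathbb{H}(\mathcal{E})$ and hence the theorem. I expect the tensor eigenvalue-$1$ lemma to be the crux: everything downstream is tensor-product monotonicity of the operator order, while everything upstream is bookkeeping already supplied by Theorems~\ref{thm:pqge} and \ref{thm:nspq}.
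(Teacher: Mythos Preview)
Your proposal is correct and shares the paper's overall architecture: pick optimal dual operators $H^{l}\in\mathbb{H}(\mathcal{E}^{l})$, set $H=\bigotimes_{l}H^{l}$, invoke Lemma~\ref{lem:nwmh} to confine supports, and verify $H-\eta_{\vec{c}}\rho_{\vec{c}}\in\mathbb{M}_{\vec{c}}^{*}(\mathcal{E})$ so that Theorem~\ref{thm:pqge} closes the gap with Inequality~\eqref{eq:pgpp}. The difference lies entirely in how that feasibility is checked. The paper expands $H-\eta_{\vec{c}}\rho_{\vec{c}}$ via the parity-sign identity of Lemma~\ref{lem:wxyz} into a sum of $2^{L-1}$ tensor products, then invokes Lemma~\ref{lem:ppee} (itself proved through Lemma~\ref{lem:wxyz}) to sandwich $\Pi(\mathcal{E})E\Pi(\mathcal{E})$ by $\bigotimes_{l}\Pi_{c_{l}}^{\bot}(\mathcal{E}^{l})$, after which Lemma~\ref{lem:mmpt} makes every factor positive semidefinite. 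You instead prove the structural content of Lemma~\ref{lem:ppee} directly---that the kernel of $\bigotimes_{l}P_{l}-\bigotimes_{l}Q_{l}$ inside $\mathrm{supp}\,\rho_{0}$ is $\bigotimes_{l}V_{l}$---by the congruence $\widetilde{Q}_{l}=P_{l}^{-1/2}Q_{l}P_{l}^{-1/2}$ and the elementary observation that a tensor product of positive contractions has eigenvalue~$1$ only on the tensor product of the individual eigenvalue-$1$ eigenspaces; the final positivity then follows from the telescoping sum $\bigotimes_{l}\widehat{H}_{l}-\bigotimes_{l}\widehat{Q}_{l}=\sum_{k}\widehat{H}_{1}\otimes\cdots\otimes(\widehat{H}_{k}-\widehat{Q}_{k})\otimes\cdots\otimes\widehat{Q}_{L}$. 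Your route bypasses Lemma~\ref{lem:wxyz} altogether and is more self-contained for this theorem, while the paper's route reuses a single combinatorial identity uniformly across Theorems~\ref{thm:fmcr}, \ref{eq:pmie}, and~\ref{thm:pgpg} as well as Lemma~\ref{lem:ppee}. Your additional remark that the tensored optimal primal and dual satisfy the complementary-slackness conditions of Theorem~\ref{thm:nspq} is a nice bonus the paper does not state explicitly.
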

To prove Theorem~\ref{thm:pgpg}, we use the following lemma that establishes a necessary condition satisfied by every element of the set in Eq.~\eqref{eq:qmxe}.
The proof of Lemma~\ref{lem:ppee} is given in Appendix~\ref{app:ppee}.
\begin{lemma}\label{lem:ppee}
For a quantum sequence ensemble $\mathcal{E}=\bigotimes_{l=1}^{L}\mathcal{E}^{l}$ in Eq.~\eqref{eq:bote}, $\vec{x}=(x_{1},\ldots,x_{L})\in\mathbb{N}_{\vec{n}}$ and $E\in\mathbb{M}_{\vec{x}}(\mathcal{E})$, we have
\begin{equation}\label{eq:peps}
\Bigg[\bigotimes_{l=1}^{L}\Pi_{x_{l}}^{\bot}(\mathcal{E}^{l})\Bigg]
\Pi(\mathcal{E})E\Pi(\mathcal{E})
\Bigg[\bigotimes_{l=1}^{L}\Pi_{x_{l}}^{\bot}(\mathcal{E}^{l})\Bigg]
=\Pi(\mathcal{E})E\Pi(\mathcal{E})
\end{equation}
where $\Pi(\mathcal{E})$ is the projection operator onto the support of $\rho_{0}$ and $\Pi_{x_{l}}^{\bot}(\mathcal{E}^{l})$ is the projection operator onto the kernel of $\mathcal{C}_{x_{l}}(\mathcal{E}^{l})\rho_{0}^{l}-\eta_{x_{l}}^{l}\rho_{x_{l}}^{l}$, for $l=1,\ldots,L$.
\end{lemma}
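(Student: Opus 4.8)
The plan is to recast the operator identity \eqref{eq:peps} as a statement about supports. Write $\Delta=\mathcal{C}_{\vec{x}}(\mathcal{E})\rho_{0}-\eta_{\vec{x}}\rho_{\vec{x}}$, $\Delta_{l}=\mathcal{C}_{x_{l}}(\mathcal{E}^{l})\rho_{0}^{l}-\eta_{x_{l}}^{l}\rho_{x_{l}}^{l}$, $P=\bigotimes_{l=1}^{L}\Pi_{x_{l}}^{\bot}(\mathcal{E}^{l})$ and $G=\Pi(\mathcal{E})E\Pi(\mathcal{E})$. Since $E\in\mathbb{H}_{+}$, the operator $G$ is positive semidefinite, and $P$ is an orthogonal projection whose range is $\bigotimes_{l=1}^{L}\ker\Delta_{l}$ (the range of $\Pi_{x_{l}}^{\bot}(\mathcal{E}^{l})$ being $\ker\Delta_{l}$); hence \eqref{eq:peps}, which reads $PGP=G$, is equivalent to the inclusion of the support of $G$ in the range of $P$. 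I would establish this inclusion in two steps: (i) the support of $G$ lies in $\ker\Delta\cap\mathrm{supp}\,\rho_{0}$, and (ii) $\ker\Delta\cap\mathrm{supp}\,\rho_{0}$ lies in the range of $P$.

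For step (i), note that $E\in\mathbb{M}_{\vec{x}}(\mathcal{E})$ gives $\Tr(\Delta E)=0$ with $\Delta,E\in\mathbb{H}_{+}$, so Lemma~\ref{lem:mmpt} (applied to $\mathcal{E}$ viewed as an ensemble indexed by $\vec{c}\in\mathbb{N}_{\vec{n}}$) yields that the support of $E$ is contained in $\ker\Delta$. Because $\mathrm{supp}\,\rho_{\vec{x}}\subseteq\mathrm{supp}\,\rho_{0}$, both $\rho_{0}$ and $\rho_{\vec{x}}$ commute with $\Pi(\mathcal{E})$, hence so does $\Delta$, so $\Pi(\mathcal{E})$ maps $\ker\Delta$ into itself. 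Combining these, the support of $G=\Pi(\mathcal{E})E\Pi(\mathcal{E})$ is contained in $\Pi(\mathcal{E})\ker\Delta\subseteq\ker\Delta$, while $\mathrm{supp}\,G\subseteq\mathrm{range}\,\Pi(\mathcal{E})=\mathrm{supp}\,\rho_{0}$ is immediate. This step is routine.

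Step (ii) is the heart of the argument. On $\mathrm{supp}\,\rho_{0}=\bigotimes_{l=1}^{L}\mathrm{supp}\,\rho_{0}^{l}$ I would conjugate by $\sqrt{\rho_{0}}^{-1}=\bigotimes_{l=1}^{L}\sqrt{\rho_{0}^{l}}^{-1}$ (inverse on the support) and set $A_{l}=\sqrt{\rho_{0}^{l}}^{-1}\eta_{x_{l}}^{l}\rho_{x_{l}}^{l}\sqrt{\rho_{0}^{l}}^{-1}$, so that $0\leqslant A_{l}\leqslant\mathcal{C}_{x_{l}}(\mathcal{E}^{l})\mathbbm{1}$ with largest eigenvalue exactly $\mathcal{C}_{x_{l}}(\mathcal{E}^{l})$. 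Then $\Delta=\sqrt{\rho_{0}}\big[\mathcal{C}_{\vec{x}}(\mathcal{E})\mathbbm{1}-A\big]\sqrt{\rho_{0}}$ with $A=\bigotimes_{l=1}^{L}A_{l}$, so a support vector lies in $\ker\Delta$ if and only if its $\sqrt{\rho_{0}}$-image lies in the eigenspace of $A$ for its largest eigenvalue, which by Theorem~\ref{thm:fmcr} equals $\prod_{l=1}^{L}\mathcal{C}_{x_{l}}(\mathcal{E}^{l})=\mathcal{C}_{\vec{x}}(\mathcal{E})$. The crucial fact is that this maximal eigenspace factorizes as the tensor product of the maximal eigenspaces of the $A_{l}$: diagonalizing each $A_{l}$, an eigenvalue $\prod_{l=1}^{L}\mu_{l}$ of $A$ with $0\leqslant\mu_{l}\leqslant\mathcal{C}_{x_{l}}(\mathcal{E}^{l})$ attains $\prod_{l=1}^{L}\mathcal{C}_{x_{l}}(\mathcal{E}^{l})$ only when every $\mu_{l}=\mathcal{C}_{x_{l}}(\mathcal{E}^{l})$, where the strict positivity $\mathcal{C}_{x_{l}}(\mathcal{E}^{l})\geqslant\eta_{x_{l}}^{l}>0$ from Lemma~\ref{lem:lbmc} prevents one small factor from being compensated by another. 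Undoing the (factorized) conjugation sends the maximal eigenspace of $A_{l}$ back to $\ker\Delta_{l}\cap\mathrm{supp}\,\rho_{0}^{l}$, so that $\ker\Delta\cap\mathrm{supp}\,\rho_{0}=\bigotimes_{l=1}^{L}\big(\ker\Delta_{l}\cap\mathrm{supp}\,\rho_{0}^{l}\big)\subseteq\bigotimes_{l=1}^{L}\ker\Delta_{l}$, which is exactly the range of $P$. Together with step (i) this gives $PGP=G$, i.e.~\eqref{eq:peps}.

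The main obstacle is the eigenspace factorization in step (ii): it is precisely where both the multiplicativity of the top eigenvalue under tensor products (Theorem~\ref{thm:fmcr}) and the strict positivity of each $\mathcal{C}_{x_{l}}(\mathcal{E}^{l})$ (Lemma~\ref{lem:lbmc}) are needed; without the latter a non-maximal eigenvalue of some $A_{l}$ could be offset, and the clean factorization --- and with it the whole lemma --- would fail. I would double-check only that the conjugation by $\sqrt{\rho_{0}}^{-1}$ genuinely respects the tensor factorization and that restricting to $\mathrm{supp}\,\rho_{0}$ (forced by the $\Pi(\mathcal{E})$'s) is what makes the kernel factorize, since on the full space $\ker\Delta$ need not be contained in the range of $P$.
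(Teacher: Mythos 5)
Your proposal is correct, but it follows a genuinely different route from the paper's Appendix~C. The paper never touches spectra: it expands $\mathcal{C}_{\vec{x}}(\mathcal{E})\rho_{0}-\eta_{\vec{x}}\rho_{\vec{x}}$ as the positive combination of PSD tensor products in Eq.~\eqref{eq:crps} (Lemma~\ref{lem:wxyz} plus Theorem~\ref{thm:fmcr}), uses trace-orthogonality of PSD operators to conclude that $\Pi(\mathcal{E})E\Pi(\mathcal{E})$ annihilates each summand's support projection (Eqs.~\eqref{eq:ogpt} and \eqref{eq:oluo}), and then finishes with pure projection algebra, Eqs.~\eqref{eq:lsfc} and \eqref{eq:lssc}. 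You instead conjugate by $\sqrt{\rho_{0}}^{-1}$, use the known spectral characterization of maximum confidence (the top eigenvalue of $A_{l}=\sqrt{\rho_{0}^{l}}^{-1}\eta_{x_{l}}^{l}\rho_{x_{l}}^{l}\sqrt{\rho_{0}^{l}}^{-1}$, cf.\ the remark after Definition~\ref{def:comd} and Eq.~\eqref{eq:cxed}), and show that the top eigenspace of $A=\bigotimes_{l}A_{l}$ factorizes because $0\leqslant\mu_{l}\leqslant\mathcal{C}_{x_{l}}(\mathcal{E}^{l})$ and $\mathcal{C}_{x_{l}}(\mathcal{E}^{l})\geqslant\eta_{x_{l}}^{l}>0$ (Lemma~\ref{lem:lbmc}) force every factor to be maximal --- this is sound, and your flagged worry about the restriction to $\mathrm{supp}\,\rho_{0}$ is exactly right, since on the full space $\ker\rho_{0}\subseteq\ker\Delta$ spoils the inclusion. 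Your approach buys a strictly stronger structural fact than the paper proves, namely the exact equality $\ker\Delta\cap\mathrm{supp}\,\rho_{0}=\bigotimes_{l=1}^{L}\big(\ker\Delta_{l}\cap\mathrm{supp}\,\rho_{0}^{l}\big)$, and it dispenses with Lemma~\ref{lem:wxyz} entirely for this lemma; the paper's approach buys uniformity, reusing the same PSD-decomposition machinery already deployed for Theorem~\ref{thm:fmcr} and Theorem~\ref{thm:pgpg} without ever diagonalizing anything. Two small points you should spell out when writing this up: in passing from $\Delta v=0$, $v\in\mathrm{supp}\,\rho_{0}$, to $A\sqrt{\rho_{0}}v=\mathcal{C}_{\vec{x}}(\mathcal{E})\sqrt{\rho_{0}}v$, you only get $\big(\mathcal{C}_{\vec{x}}(\mathcal{E})\mathbbm{1}-A\big)\sqrt{\rho_{0}}v\in\ker\sqrt{\rho_{0}}$ at first, and you must note that $A$ preserves $\mathrm{supp}\,\rho_{0}$ so this vector lies in $\mathrm{supp}\,\rho_{0}\cap\ker\rho_{0}=\{0\}$; and your appeal to Lemma~\ref{lem:mmpt} for the sequence ensemble is legitimate since its proof is generic in the ensemble (the paper itself uses $\mathbb{M}_{\vec{c}}(\mathcal{E})$ and Lemma~\ref{lem:mmpt} this way in the proof of Theorem~\ref{thm:pgpg}), and there is no circularity in invoking Theorem~\ref{thm:fmcr}, which is proved before and independently of this lemma.
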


For convenience, we briefly sketch the proof of Theorem~\ref{thm:pgpg} and clarify the role of Lemma~\ref{lem:ppee}.
We consider a Hermitian operator $H=\bigotimes_{l=1}^{L}H^{l}$, where $H^{1},\ldots,H^{L}$ are Hermitian operators achieving $q_{\sf G}(\mathcal{E}^{1}),\ldots,q_{\sf G}(\mathcal{E}^{L})$, respectively.
Lemma~\ref{lem:ppee}, together with Lemmas~\ref{lem:mmpt}--\ref{lem:wxyz}, is then used to show that $H\in\mathbb{H}(\mathcal{E})$, which implies the bound $q_{\sf G}(\mathcal{E})\leqslant\Tr H$.
Combining this bound with Theorem~\ref{thm:pqge} establishes the converse of Inequality~\eqref{eq:pgpp}.

\begin{proof}[Proof of Theorem~\ref{thm:pgpg}]
As we already have Inequality~\eqref{eq:pgpp}, it is enough to show that
\begin{equation}\label{eq:pleg}
p_{\sf G}(\mathcal{E})\leqslant\prod_{l=1}^{L}p_{\sf G}(\mathcal{E}^{l}).
\end{equation}

For each $l=1,\ldots,L$, let us suppose that $H^{l}\in\mathbb{H}(\mathcal{E}^{l})$ realizes $q_{\sf G}(\mathcal{E}^{l})$, where $q_{\sf G}(\mathcal{E}^{l})$ and $\mathbb{H}(\mathcal{E}^{l})$ are defined in Eqs.~\eqref{eq:qgbd} and \eqref{eq:hpbd}, respectively.
For the Hermitian operator 
\begin{equation}\label{eq:teph}
H=\bigotimes_{l=1}^{L}H^{l},
\end{equation}
we show $H\in\mathbb{H}(\mathcal{E})$, or equivalently
\begin{equation}\label{eq:hinh}
H-\eta_{\vec{c}}\rho_{\vec{c}}\in\mathbb{M}_{\vec{c}}^{*}(\mathcal{E})
\end{equation}
for all $\vec{c}\in\mathbb{N}_{\vec{n}}$; therefore, we have
\begin{equation}\label{eq:purn}
p_{\sf G}(\mathcal{E})=q_{\sf G}(\mathcal{E})
\leqslant\Tr H
=\prod_{l=1}^{L}q_{\sf G}(\mathcal{E}^{l})
=\prod_{l=1}^{L}p_{\sf G}(\mathcal{E}^{l}),
\end{equation}
where the first and last equalities are from Theorem~\ref{thm:pqge}, the inequality follows from the definition of $q_{\sf G}(\mathcal{E})$ in Eq.~\eqref{eq:qgbd}, the second equality is from Eq.~\eqref{eq:teph}, and the assumption of $H^{l}$ for $l=1,\ldots,L$.

For each $\vec{c}\in\mathbb{N}_{\vec{n}}$, every $E\in\mathbb{M}_{\vec{x}}(\mathcal{E})$ satisfies
\begin{align}\label{eq:tehe}
\Tr[E(H-\eta_{\vec{x}}\rho_{\vec{x}})]
&=\frac{1}{2^{L-1}}\sum_{\substack{\vec{b}\in\mathbb{Z}_{2}^{L}\\ \omega_{2}(\vec{b})=1}}\Tr\Bigg[\Pi(\mathcal{E})E\Pi(\mathcal{E})\bigotimes_{l=1}^{L}\big(H^{l}+(-1)^{b_{l}}\eta_{c_{l}}^{l}\rho_{c_{l}}^{l}\big)\Bigg]\nonumber\\
&=\frac{1}{2^{L-1}}\sum_{\substack{\vec{b}\in\mathbb{Z}_{2}^{L}\\ \omega_{2}(\vec{b})=1}}\Tr\Bigg[
\Big[\bigotimes_{l=1}^{L}\Pi_{c_{l}}^{\bot}(\mathcal{E}^{l})\Big]
\Pi(\mathcal{E})E\Pi(\mathcal{E})
\Big[\bigotimes_{l=1}^{L}\Pi_{c_{l}}^{\bot}(\mathcal{E}^{l})\Big]
\bigotimes_{l=1}^{L}\big(H^{l}+(-1)^{b_{l}}\eta_{c_{l}}^{l}\rho_{c_{l}}^{l}\big)\Bigg]\nonumber\\
&=\frac{1}{2^{L-1}}\sum_{\substack{\vec{b}\in\mathbb{Z}_{2}^{L}\\ \omega_{2}(\vec{b})=1}}\Tr\Bigg[
\Pi(\mathcal{E})E\Pi(\mathcal{E})\bigotimes_{l=1}^{L}
\Pi_{c_{l}}^{\bot}(\mathcal{E}^{l})
\big(H^{l}+(-1)^{b_{l}}\eta_{c_{l}}^{l}\rho_{c_{l}}^{l}\big)
\Pi_{c_{l}}^{\bot}(\mathcal{E}^{l})
\Bigg],
\end{align}
where $\Pi(\mathcal{E})$ is the projection operator onto the support of $\rho_{0}$, and $\Pi(\mathcal{E}^{l})$ and $\Pi_{c_{l}}^{\bot}(\mathcal{E}^{l})$ are the projection operators onto the support of $\rho_{0}^{l}$ and the kernel of $\mathcal{C}_{c_{l}}(\mathcal{E}^{l})\rho_{0}^{l}-\eta_{c_{l}}^{l}\rho_{c_{l}}^{l}$, respectively, for $l=1,\ldots,L$.
In Eq.~\eqref{eq:tehe}, the first equality is from Lemmas~\ref{lem:nwmh} and \ref{lem:wxyz} together with $\Pi(\mathcal{E})=\bigotimes_{l=1}^{L}\Pi(\mathcal{E}^{l})$ and the second equality is from Lemma~\ref{lem:ppee}.
The last term in Eq.~\eqref{eq:tehe} is non-negative because $E\in\mathbb{M}_{\vec{c}}(\mathcal{E})$ ensures the positive semidefiniteness of $\Pi(\mathcal{E})E\Pi(\mathcal{E})$ and $H^{l}-\eta_{c_{l}}^{l}\rho_{c_{l}}^{l}\in\mathbb{M}_{\vec{c}}^{*}(\mathcal{E}^{l})$, together with Lemma~\ref{lem:mmpt}, implies the positive semidefiniteness of $\Pi_{c_{l}}^{\bot}(\mathcal{E}^{l})\big(H^{l}\pm\eta_{c_{l}}^{l}\rho_{c_{l}}^{l}\big)\Pi_{c_{l}}^{\bot}(\mathcal{E}^{l})$ for each $l=1,\ldots,L$.
Thus the definition of $\mathbb{M}_{\vec{c}}^{*}(\mathcal{E})$ in Eq.~\eqref{eq:dset} leads us to Inclusion~\eqref{eq:hinh}.\qedhere
\end{proof}

We illustrate Theorem~\ref{thm:pgpg} in the following example.
\begin{example}\label{ex:fmcm}
For a positive integer $L$, let us consider the quantum sequence ensembles $\mathcal{E}=\bigotimes_{l=1}^{L}\mathcal{E}^{l}$, where each $\mathcal{E}^{l}$ is the qubit state ensemble $\{\eta_{i}^{l},\rho_{i}^{l}\}_{i=1}^{3}$ with
\begin{align}\label{eq:ex}
\eta_{1}^{l}=\tfrac{1}{3},~
&\rho_{1}^{l}=\ket{\psi_{1}^{l}}\!\bra{\psi_{1}^{l}},~
\ket{\psi_{1}^{l}}=\cos\theta_{l}\ket{0}+\sin\theta_{l}\ket{1},\nonumber\\
\eta_{2}^{l}=\tfrac{1}{3},~
&\rho_{2}^{l}=\ket{\psi_{2}^{l}}\!\bra{\psi_{2}^{l}},~
\ket{\psi_{2}^{l}}=\cos\theta_{l}\ket{0}+e^{\frac{2\pi\mathsf{i}}{3}}\sin\theta_{l}\ket{1},
\nonumber\\
\eta_{3}^{l}=\tfrac{1}{3},~
&\rho_{3}^{l}=\ket{\psi_{3}^{l}}\!\bra{\psi_{3}^{l}},~
\ket{\psi_{3}^{l}}=\cos\theta_{l}\ket{0}+e^{\frac{4\pi\mathsf{i}}{3}}\sin\theta_{l}\ket{1},
~0<\theta_{l}\leqslant\tfrac{\pi}{4},
\end{align}
for $l=1,\ldots,L$.
\end{example}

For each $l=1,\ldots,L$, the measurement $\mathcal{M}^{l}=\{M_{?}^{l}\}\cup\{M_{i}^{l}\}_{i=1}^{3}$, given by
\begin{align}\label{eq:exmcm}
M_{?}^{l}&=(1-\tan^{2}\theta_{l})\ket{0}\!\bra{0},\nonumber\\
M_{1}^{l}&=\tfrac{1}{3\cos^{2}\theta_{l}}\ket{\phi_{1}^{l}}\!\bra{\phi_{1}^{l}},~\ket{\phi_{1}^{l}}=\sin\theta_{l}\ket{0}+\cos\theta_{l}\ket{1},\nonumber\\
M_{2}^{l}&=\tfrac{1}{3\cos^{2}\theta_{l}}\ket{\phi_{2}^{l}}\!\bra{\phi_{2}^{l}},~\ket{\phi_{2}^{l}}=\sin\theta_{l}\ket{0}+e^{\frac{2\pi\mathsf{i}}{3}}\cos\theta_{l}\ket{1},\nonumber\\
M_{3}^{l}&=\tfrac{1}{3\cos^{2}\theta_{l}}\ket{\phi_{3}^{l}}\!\bra{\phi_{3}^{l}},~\ket{\phi_{3}^{l}}=\sin\theta_{l}\ket{0}+e^{\frac{4\pi\mathsf{i}}{3}}\cos\theta_{l}\ket{1},
\end{align}
is the MCM of $\mathcal{E}^{l}$ providing $p_{\sf G}(\mathcal{E}^{l})$\cite{crok2006}, that is,
\begin{equation}\label{eq:cpgv}
p_{\sf G}(\mathcal{E}^{l})=\sum_{i=1}^{3}\eta_{i}^{l}\Tr(\rho_{i}^{l}M_{i}^{l})=\frac{4}{3}\sin^{2}\theta_{l}.
\end{equation}
Thus Theorem~\ref{thm:pgpg} leads us to
\begin{equation}\label{eq:rst}
p_{\sf G}(\mathcal{E})=\prod_{l=1}^{L}p_{\sf G}(\mathcal{E}^{l})=\prod_{l=1}^{L}\Big(\frac{4}{3}\sin^{2}\theta_{l}\Big).
\end{equation}
In other words, the optimal quantum sequence discrimination of $\mathcal{E}$ under MCMs can be achieved just by performing the MCM $\mathcal{M}^{l}$ in Eq.~\eqref{eq:exmcm} independently at each step $l=1,\ldots,L$ of the quantum sequence.

\section{Discussion}\label{sec:disc}
We have considered quantum sequence discrimination under MCMs and shown that the optimal discrimination of a quantum sequence ensemble can always be factorized into that of each individual ensemble.
In other words, the optimal quantum sequence discrimination under MCMs can be achieved just by performing a maximum-confidence discrimination independently at each step of the quantum sequence.
We have also shown that the maximum confidence of identifying a quantum sequence is to achieve the maximum confidence of identifying each state in the quantum sequence.
We have further established a necessary and sufficient condition for the optimal quantum state discrimination under MCMs.

We note that our results include the known results on unambiguous discrimination of quantum sequences\cite{gupt20241,gupt20242}.
In unambiguous discrimination of quantum states, the states are discriminated  without error (i.e., unambiguously) by allowing an inconclusive measurement outcome\cite{ivan1987,diek1988,pere1988}.
For a given quantum state ensemble, a state can be unambiguously discriminated if and only if the maximum confidence of identifying the state is equal to one\cite{crok2006}.
Therefore, it follows from Theorem~\ref{thm:fmcr} that a quantum sequence from a quantum sequence ensemble can be unambiguously discriminated if and only if each state in the quantum sequence can be unambiguously discriminated.

When quantum states can be unambiguously discriminated, MCMs become unambiguous measurements that provide unit confidence for each state.
In this case, the optimal state discrimination under MCMs coincides with the optimal unambiguous discrimination.
Therefore, Theorem~\ref{thm:pgpg} ensures that the optimal unambiguous discrimination of a quantum sequence ensemble can always be factorized into that of each individual ensemble when quantum sequences can be unambiguously discriminated.

Our results can be applied to situations involving multiple independent repetitions of quantum information processing tasks based on MCMs, such as quantum cryptography and quantum teleportation\cite{benn1992,neve2012}. 
In these situations, our results indicate that no advantage can be gained even if the receiver has access to quantum memory and considers performing a collective measurement on the quantum sequence.

As our results show the factorizability of the optimal quantum sequence discrimination under MCMs, it is natural to investigate measurement conditions under which the optimal discrimination of a quantum sequence ensemble can be factorized into that of each individual ensemble.
It is also an interesting direction for future research to investigate factorizability in the confidence of identifying a quantum sequence from a multiparty quantum sequence ensemble.

In particular, our analysis relies on the assumption that each state composing a quantum sequence is independently prepared from a given quantum state ensemble.
Relaxing this assumption to allow correlations or entanglement between different states composing the sequence may fundamentally change the structure of optimal discrimination strategies and potentially invalidate the factorizability property.
Moreover, the availability of quantum memory or adaptive feedback may enable collective measurement strategies that surpass the performance of factorizable ones under the same optimality criterion.
Investigating such scenarios remains an important open problem for future work.

\section*{Acknowledgments}
This work was supported by Korea Research Institute for defense Technology planning and advancement (KRIT) grant funded by Defense Acquisition Program Administration(DAPA)(KRIT-CT-23–031), a National Research Foundation of Korea(NRF) grant funded by the Korean government(Ministry of Science and ICT) (No.NRF2023R1A2C1007039), and the Institute for Information \& Communications Technology Planning \& Evaluation(IITP) grant funded by the Korean government(MSIP)(Grant No. RS-2025-02304540). JSK was supported by Creation of the Quantum Information Science R\&D Ecosystem(Grant No. 2022M3H3A106307411) through the National Research Foundation of Korea(NRF) funded by the Korean government(Ministry of Science and ICT).

\appendix
\appsection{Proof of Theorem~\ref{thm:pqge}}\label{app:pqge}
In this section, we prove Theorem~\ref{thm:pqge} by showing that
\begin{subequations}\label{eq:pgqg}
\begin{align}
p_{\sf G}(\mathcal{E})&\leqslant q_{\sf G}(\mathcal{E}),\label{eq:pqgu}\\
p_{\sf G}(\mathcal{E})&\geqslant q_{\sf G}(\mathcal{E}).\label{eq:pqgl}
\end{align}
\end{subequations}
\begin{proof}[Proof of Inequality~\eqref{eq:pqgu}]
Let us assume that $\mathcal{M}=\{M_{?}\}\cup\{M_{i}\}_{i\in\mathbb{N}_{n}}\in\mathbb{M}(\mathcal{E})$ and $H\in\mathbb{H}(\mathcal{E})$ realize $p_{\sf G}(\mathcal{E})$ and $q_{\sf G}(\mathcal{E})$, respectively.
From the definitions of $\mathbb{M}(\mathcal{E})$ and $\mathbb{H}(\mathcal{E})$ in Eqs.~\eqref{eq:smcm} and \eqref{eq:hpbd}, we have
\begin{equation}\label{eq:ocpc}
\Tr(M_{?}H)\geqslant0,~
\Tr[M_{i}(H-\eta_{i}\rho_{i})]\geqslant0
\end{equation}
for all $i\in\mathbb{N}_{n}$.
Thus we have
\begin{eqnarray}\label{eq:ppqg}
p_{\sf G}(\mathcal{E})&=&\sum_{i\in\mathbb{N}_{n}}\eta_{i}\Tr(\rho_{i}M_{i})\nonumber\\
&\leqslant&\sum_{i\in\mathbb{N}_{n}}\eta_{i}\Tr(\rho_{i}M_{i})
+\Tr(M_{?}H)
+\sum_{i\in\mathbb{N}_{n}}\Tr[M_{i}(H-\eta_{i}\rho_{i})]\nonumber\\
&=&\Tr H=q_{\sf G}(\mathcal{E}),
\end{eqnarray}
where the first equality is from the assumption of $\mathcal{M}$, the inequality is by Inequalities~\eqref{eq:ocpc}, the second equality is due to $M_{?}+\sum_{i\in\mathbb{N}_{n}}M_{i}=\mathbbm{1}$, and the last equality is from the assumption of $H$.\qedhere
\end{proof}
\begin{proof}[Proof of Inequality~\eqref{eq:pqgl}]
Let us consider the set
\begin{equation}\label{eq:sbdf}
\mathcal{S}(\mathcal{E})=\Big\{\Big(\sum_{i\in\mathbb{N}_{n}}\eta_{i}\Tr(\rho_{i}M_{i})-p,\,\mathbbm{1}-M_{?}-\sum_{i\in\mathbb{N}_{n}}M_{i}\Big)\in\mathbb{R}\times\mathbb{H}\,\Big|\,p>p_{\sf G}(\mathcal{E}),\,M_{?}\in\mathbb{H}_{+},
\,M_{1}\in\mathbb{M}_{1}(\mathcal{E}),\ldots,M_{n}\in\mathbb{M}_{n}(\mathcal{E})\Big\}.
\end{equation}
We note that $\mathcal{S}(\mathcal{E})$ is a convex set due to the convexity of $\mathbb{H}_{+}$ and $\mathbb{M}_{1}(\mathcal{E}),\ldots,\mathbb{M}_{n}(\mathcal{E})$.
Moreover, $\mathcal{S}(\mathcal{E})$ does not have the origin $(0,\mathbb{O})$ of $\mathbb{R}\times\mathbb{H}$; otherwise, there exists $\{M_{?}\}\cup\{M_{i}\}_{i\in\mathbb{N}_{n}}\in\mathbb{M}(\mathcal{E})$ with
\begin{equation}\label{eq:ctop}
\sum_{i\in\mathbb{N}_{n}}\eta_{i}\Tr(\rho_{i}M_{i})>p_{\sf G}(\mathcal{E}),
\end{equation}
and this contradicts the optimality of $p_{\sf G}(\mathcal{E})$ in Eq.~\eqref{eq:pgdf}.
We also note that the Cartesian product $\mathbb{R}\times\mathbb{H}$ can be considered as a real vector space with an inner product defined as
\begin{equation}\label{eq:indf}
\braket{(a,A),(b,B)}=ab+\Tr(AB)
\end{equation}
for $(a,A),(b,B)\in\mathbb{R}\times\mathbb{H}$.

Since $\mathcal{S}(\mathcal{E})$ and the single-element set $\{(0,\mathbb{O})\}$ are disjoint convex sets, it follows from the separating hyperplane theorem\cite{boyd2004,sht} that there exists $(\gamma,\Gamma)\in\mathbb{R}\times\mathbb{H}$ satisfying
\begin{subequations}\label{eq:ggoo}
\begin{gather}
(\gamma,\Gamma)\neq(0,\mathbb{O}),\label{eq:ggfc}\\
\braket{(\gamma,\Gamma),(r,R)}\leqslant0\label{eq:ggsc}
\end{gather}
\end{subequations}
for all $(r,R)\in\mathcal{S}(\mathcal{E})$.

Suppose
\begin{subequations}\label{eq:acob}
\begin{gather}
\Tr\Gamma\leqslant\gamma p_{\sf G}(\mathcal{E}),\label{eq:fcob}\\
\Gamma\in\mathbb{H}_{+},\label{eq:scob}\\
\gamma>0,\label{eq:tcob}\\ 
\Gamma-\gamma\eta_{i}\rho_{i}\in\mathbb{M}_{i}^{*}(\mathcal{E})\label{eq:lcob}
\end{gather}
\end{subequations}
for all $i\in\mathbb{N}_{n}$.
From Conditions~\eqref{eq:scob}, \eqref{eq:tcob} and \eqref{eq:lcob}, $\Gamma/\gamma$ is an element of $\mathbb{H}(\mathcal{E})$ in Eq.~\eqref{eq:hpbd}.
Thus the definition of $q_{\sf G}(\mathcal{E})$ in Eq.~\eqref{eq:qgbd} leads us to 
\begin{equation}\label{eq:srth}
q_{\sf G}(\mathcal{E})\leqslant\Tr(\Gamma/\gamma).
\end{equation}
Moreover, Conditions~\eqref{eq:fcob} and \eqref{eq:tcob} imply
\begin{equation}\label{eq:srlb}
\Tr(\Gamma/\gamma)\leqslant p_{\sf G}(\mathcal{E}).
\end{equation}
Inequalities~\eqref{eq:srth} and \eqref{eq:srlb} complete the proof of Inequality~\eqref{eq:pqgl}.
The rest of this proof is to prove Condition~\eqref{eq:acob}.\qedhere
\end{proof}

\begin{proof}[Proof of \eqref{eq:fcob}]
From Eq.~\eqref{eq:indf}, Inequality~\eqref{eq:ggsc} can be rewritten as
\begin{equation}\label{eq:inrt}
\Tr\Gamma-\Tr(M_{?}\Gamma)-\sum_{i\in\mathbb{N}_{n}}\Tr[M_{i}(\Gamma-\gamma\eta_{i}\rho_{i})]\leqslant\gamma p
\end{equation}
for all $p>p_{\sf G}(\mathcal{E})$ and all $\{M_{?}\}\cup\{M_{i}\}_{i\in\mathbb{N}_{n}}$ with
\begin{equation}\label{eq:domh}
M_{?}\in\mathbb{H}_{+},\,M_{1}\in\mathbb{M}_{1}(\mathcal{E}),\ldots,M_{n}\in\mathbb{M}_{n}(\mathcal{E}).
\end{equation}
$\{M_{?}\}\cup\{M_{i}\}_{i\in\mathbb{N}_{n}}$ with $M_{i}=\mathbb{O}$ for all $i\in\{?\}\cup\mathbb{N}_{n}$ clearly satisfies Inclusion~\eqref{eq:domh}; therefore, Inequality~\eqref{eq:inrt} becomes Inequality~\eqref{eq:fcob} by taking the limit of $p$ to $p_{\sf G}(\mathcal{E})$.\qedhere
\end{proof}

\begin{proof}[Proof of \eqref{eq:scob}]
For an arbitrary $M_{?}\in\mathbb{H}_{+}$ and $M_{i}=\mathbb{O}$ for all $i\in\mathbb{N}_{n}$, $\{M_{?}\}\cup\{M_{i}\}_{i\in\mathbb{N}_{n}}$ clearly satisfies Inclusion~\eqref{eq:domh}.
In this case, Inequality~\eqref{eq:inrt} becomes 
\begin{equation}\label{eq:ssif}
\Tr\Gamma-\Tr(M_{?}\Gamma)\leqslant\gamma p_{\sf G}(\mathcal{E})
\end{equation}
by taking the limit of $p$ to $p_{\sf G}(\mathcal{E})$.

Suppose $\Gamma\notin\mathbb{H}_{+}$; then there exists $M\in\mathbb{H}_{+}$ with $\Tr(M\Gamma)<0$. 
We note that $M\in\mathbb{H}_{+}$ implies $tM\in\mathbb{H}_{+}$ for all $t>0$.
Thus $\{M_{?}\}\cup\{M_{i}\}_{i\in\mathbb{N}_{n}}$ with $M_{?}=tM$ for $t>0$ and $M_{i}=\mathbb{O}$ for all $i\in\mathbb{N}_{n}$ also satisfies Inclusion~\eqref{eq:domh}.
Now, Inequality~\eqref{eq:ssif} can be rewritten as
\begin{equation}\label{eq:asif}
\Tr\Gamma-t\Tr(M\Gamma)\leqslant\gamma p_{\sf G}(\mathcal{E})
\end{equation}
Since Inequality~\eqref{eq:asif} is true for arbitrarily large $t>0$, $\gamma p_{\sf G}(\mathcal{E})$ can also be arbitrarily large.
However, this contradicts that both $\gamma$ and $p_{\sf G}(\mathcal{E})$ are finite.
Thus $\Gamma\in\mathbb{H}_{+}$, which completes the proof of Inequality~\eqref{eq:scob}.\qedhere
\end{proof}

\begin{proof}[Proof of \eqref{eq:tcob}]
Suppose $\Gamma\neq\mathbb{O}$.
From Inequality~\eqref{eq:scob}, we have $\Tr\Gamma>0$. 
Thus, Inequality~\eqref{eq:fcob} and Theorem~\ref{thm:pglb} lead us to $\gamma>0$.

Now, suppose $\Gamma=\mathbb{O}$. We have $\gamma\neq0$, otherwise a contradiction to Eq.~\eqref{eq:ggfc}.
The strict positivity of $\gamma$ follows from Inequality~\eqref{eq:fcob} and Theorem~\ref{thm:pglb} together with $\Tr\Gamma=0$ and $\gamma\neq0$.
Thus Inequality~\eqref{eq:tcob} holds regardless of $\Gamma$.\qedhere
\end{proof}

\begin{proof}[Proof of \eqref{eq:lcob}]
The proof method is analog to that of \eqref{eq:scob}.
For each $x\in\mathbb{N}_{n}$, let us consider an arbitrary $M_{x}\in\mathbb{M}_{x}(\mathcal{E})$ and $M_{i}=\mathbb{O}$ for all $i\in\{?\}\cup\mathbb{N}_{n}\setminus\{x\}$. 
In this case, $\{M_{?}\}\cup\{M_{i}\}_{i\in\mathbb{N}_{n}}$ clearly satisfies Inclusion~\eqref{eq:domh}; therefore, Inequality~\eqref{eq:inrt} becomes
\begin{equation}\label{eq:inrf}
\Tr\Gamma-\Tr[M_{x}(\Gamma-\gamma\eta_{x}\rho_{x})]\leqslant\gamma p_{\sf G}(\mathcal{E})
\end{equation}
by taking the limit of $p$ to $p_{\sf G}(\mathcal{E})$.

Suppose $\Gamma-\gamma\eta_{x}\rho_{x}\notin\mathbb{M}_{x}^{*}(\mathcal{E})$; then there exists $M\in\mathbb{M}_{x}(\mathcal{E})$ with $\Tr[M(\Gamma-\gamma\eta_{x}\rho_{x})]<0$.
We note that $M\in\mathbb{M}_{x}(\mathcal{E})$ implies $tM\in\mathbb{M}_{x}(\mathcal{E})$.
Thus $\{M_{?}\}\cup\{M_{i}\}_{i\in\mathbb{N}_{n}}$ consisting of $M_{x}=tM$ for $t>0$ and $M_{i}=\mathbb{O}$ for all $i\in\{?\}\cup\mathbb{N}_{n}\setminus\{x\}$ also satisfies Inclusion~\eqref{eq:domh}.
Now, Inequality~\eqref{eq:inrf} can be rewritten as
\begin{equation}\label{eq:iddf}
\Tr\Gamma-t\Tr[M(\Gamma-\gamma\eta_{x}\rho_{x})]\leqslant\gamma p_{\sf G}(\mathcal{E}).
\end{equation}
Since Inequality~\eqref{eq:iddf} is true for arbitrarily large $t>0$, $\gamma p_{\sf G}(\mathcal{E})$ can also be arbitrarily large.
However, this contradicts that both $\gamma$ and $p_{\sf G}(\mathcal{E})$ are finite.
Thus $\Gamma-\gamma\eta_{x}\rho_{x}\in\mathbb{M}_{x}^{*}(\mathcal{E})$, which completes the proof of \eqref{eq:lcob}.\qedhere
\end{proof}

\appsection{Proof of Lemma~\ref{lem:wxyz}}\label{app:wxyz}
For the case when $L=1$, Eq.~\eqref{eq:xmwy} is trivially satisfied for all $a\in\{0,1\}$.
For the case when $L>1$, we use the mathematical induction on $L$. 
We first suppose that the lemma is true for any positive integer less than or equal to $L-1$, that is,
\begin{equation}\label{eq:xmym}
\bigotimes_{l=1}^{L-1}X_{l}+(-1)^{a}\bigotimes_{l=1}^{L-1}Y_{l}=\frac{1}{2^{L-2}}\sum_{\substack{\vec{b}\in\mathbb{Z}_{2}^{L-1}\\ \omega_{2}(\vec{b})=a}}\bigotimes_{l=1}^{L-1}\big[X_{l}+(-1)^{b_{l}}Y_{l}\big]
\end{equation}
for each $a\in\{0,1\}$, and show the validity of Eq.~\eqref{eq:xmwy} for each $a\in\{0,1\}$.

For the case of $L$ copies, we have
\begin{align}\label{eq:xypv}
X+(-1)^{a}Y=&\frac{1}{2}(X_{1}\otimes\cdots\otimes X_{L-1}+Y_{1}\otimes\cdots\otimes Y_{L-1})\otimes[X_{L}+(-1)^{a}Y_{L}]\nonumber\\
&+\frac{1}{2}(X_{1}\otimes\cdots\otimes X_{L-1}-Y_{1}\otimes\cdots\otimes Y_{L-1})\otimes[X_{L}-(-1)^{a}Y_{L}]\nonumber\\
=&\frac{1}{2^{L-1}}\sum_{\substack{\vec{b}'\in\mathbb{Z}_{2}^{L-1}\\ \omega_{2}(\vec{b}')=0}}[X_{1}+(-1)^{b_{1}'}Y_{1}]\otimes\cdots\otimes[X_{L-1}+(-1)^{b_{L-1}'}Y_{L-1}]\otimes[X_{L}+(-1)^{a}Y_{L}]
\nonumber\\
&+\frac{1}{2^{L-1}}\sum_{\substack{\vec{b}'\in\mathbb{Z}_{2}^{L-1}\\ \omega_{2}(\vec{b}')=1}}[X_{1}+(-1)^{b_{1}'}Y_{1}]\otimes\cdots\otimes[X_{L-1}+(-1)^{b_{L-1}'}Y_{L-1}]\otimes[X_{L}-(-1)^{a}Y_{L}]
\nonumber\\
=&\frac{1}{2^{L-1}}\sum_{\substack{\vec{b}\in\mathbb{Z}_{2}^{L}\\ \omega_{2}(\vec{b})=a}}\bigotimes_{l=1}^{L}\big[X_{l}+(-1)^{b_{l}}Y_{l}\big]
\end{align}
for each $a\in\{0,1\}$, where $b_{l}'$ and $b_{l}$ are the $l$th entry of $\vec{b}'=(b_{1}',\ldots,b_{L-1}')$ and $\vec{b}=(b_{1},\ldots,b_{L})$, respectively.
In Eq.~\eqref{eq:xypv}, the first equality is by $X=\bigotimes_{l=1}^{L}X_{l}$ and $Y=\bigotimes_{l=1}^{L}Y_{l}$, the second equality is from the induction hypothesis in Eq.~\eqref{eq:xmym}, and the last equality holds because $-(-1)^{a}=(-1)^{1-a}$ and
\begin{align}\label{eq:ztle}
\{\vec{b}\in\mathbb{Z}_{2}^{L}\,|\,\omega_{2}(\vec{b})=a\}=&\{(b_{1}',\ldots,b_{L-1}',a)\,|\,\vec{b}'=(b_{1}',\ldots,b_{L-1}')\in\mathbb{Z}_{2}^{L-1}~\mbox{with}~\omega_{2}(\vec{b}')=0\}
\nonumber\\
&\cup\{(b_{1}',\ldots,b_{L-1}',1-a)\,|\,\vec{b}'=(b_{1}',\ldots,b_{L-1}')\in\mathbb{Z}_{2}^{L-1}~\mbox{with}~\omega_{2}(\vec{b}')=1\}.
\end{align}
Thus our lemma is true.

\appsection{Proof of Lemma~\ref{lem:ppee}}\label{app:ppee}
In this section, we use $\Pi(\mathcal{E})$ and $\Pi^{\bot}(\mathcal{E})$ to denote the projection operators onto the support and kernel of $\rho_{0}$, respectively.
For each $l=1,\ldots,L$, we also denote by $\Pi(\mathcal{E}^{l})$ and $\Pi^{\bot}(\mathcal{E}^{l})$ the projection operators onto the support and kernel of $\rho_{0}^{l}$, respectively.
Moreover, we denote by $\Pi_{x_{l}}(\mathcal{E}^{l})$ and $\Pi_{x_{l}}^{\bot}(\mathcal{E}^{l})$ the projection operators onto the support and kernel of $\mathcal{C}_{x_{l}}(\mathcal{E}^{l})\rho_{0}^{l}-\eta_{x_{l}}^{l}\rho_{x_{l}}^{l}$, respectively.

Let us consider the following equalities:
\begin{subequations}\label{eq:lsfs}
\begin{gather}
\bigotimes_{l=1}^{L}\big[\Pi(\mathcal{E}^{l})-\Pi_{x_{l}}(\mathcal{E}^{l})\big]\Pi_{x_{l}}^{\bot}(\mathcal{E}^{l})=\bigotimes_{l=1}^{L}\big[\Pi(\mathcal{E}^{l})-\Pi_{x_{l}}(\mathcal{E}^{l})\big],
\label{eq:lsfc}\\
\Pi(\mathcal{E})E\Pi(\mathcal{E})\bigotimes_{l=1}^{L}\big[\Pi(\mathcal{E}^{l})-\Pi_{x_{l}}(\mathcal{E}^{l})\big]
=\Pi(\mathcal{E})E\Pi(\mathcal{E}).
\label{eq:lssc}
\end{gather}
\end{subequations}
We note that the equalities in Eq.~\eqref{eq:lsfs} imply
\begin{eqnarray}\label{eq:fawh}
\Pi(\mathcal{E})E\Pi(\mathcal{E})\bigotimes_{l=1}^{L}\Pi_{x_{l}}^{\bot}(\mathcal{E}^{l})
&=&\Pi(\mathcal{E})E\Pi(\mathcal{E})\bigotimes_{l=1}^{L}\big[\Pi(\mathcal{E}^{l})-\Pi_{x_{l}}(\mathcal{E}^{l})\big]\Pi_{x_{l}}^{\bot}(\mathcal{E}^{l})\nonumber\\
&=&\Pi(\mathcal{E})E\Pi(\mathcal{E})\bigotimes_{l=1}^{L}\big[\Pi(\mathcal{E}^{l})-\Pi_{x_{l}}(\mathcal{E}^{l})\big]\nonumber\\
&=&\Pi(\mathcal{E})E\Pi(\mathcal{E}),
\end{eqnarray}
where the first and last equalities are by Eq.~\eqref{eq:lssc} and the second equality is by Eq.~\eqref{eq:lsfc}.
Now, we have
\begin{equation}\label{eq:twhr}
\Bigg[\bigotimes_{l=1}^{L}\Pi_{x_{l}}^{\bot}(\mathcal{E}^{l})\Bigg]
\Pi(\mathcal{E})E\Pi(\mathcal{E})
\Bigg[\bigotimes_{l=1}^{L}\Pi_{x_{l}}^{\bot}(\mathcal{E}^{l})\Bigg]
=\Bigg[\bigotimes_{l=1}^{L}\Pi_{x_{l}}^{\bot}(\mathcal{E}^{l})\Bigg]
\Pi(\mathcal{E})E\Pi(\mathcal{E})
=\Pi(\mathcal{E})E\Pi(\mathcal{E}),
\end{equation}
which is the claim of our lemma.
Thus it is enough to show the validity of Eqs.~\eqref{eq:lsfc} and \eqref{eq:lssc}.

\begin{proof}[Proof of Eq.~\eqref{eq:lsfc}]
For each $l=1,\ldots,L$, we have
\begin{eqnarray}\label{eq:pffc}
\big[\Pi(\mathcal{E}^{l})-\Pi_{x_{l}}(\mathcal{E}^{l})\big]\Pi_{x_{l}}^{\bot}(\mathcal{E}^{l})
&=&\Pi(\mathcal{E}^{l})\Pi_{x_{l}}^{\bot}(\mathcal{E}^{l})\nonumber\\
&=&\Pi(\mathcal{E}^{l})\big[\Pi(\mathcal{E}^{l})+\Pi^{\bot}(\mathcal{E}^{l})-\Pi_{x_{l}}(\mathcal{E}^{l})\big]\nonumber\\
&=&\Pi(\mathcal{E}^{l})-\Pi(\mathcal{E}^{l})\Pi_{x_{l}}(\mathcal{E}^{l}),
\end{eqnarray}
where the first equality is from the definitions of $\Pi_{x_{l}}(\mathcal{E}^{l})$ and $\Pi_{x_{l}}^{\bot}(\mathcal{E}^{l})$ and the last equality is from the definitions of $\Pi(\mathcal{E}^{l})$ and $\Pi^{\bot}(\mathcal{E}^{l})$.
The second equality in Eq.~\eqref{eq:pffc} is satisfied because 
\begin{equation}\label{eq:pieq}
\Pi(\mathcal{E}^{l})+\Pi^{\bot}(\mathcal{E}^{l})
=\Pi_{x_{l}}(\mathcal{E}^{l})+\Pi_{x_{l}}^{\bot}(\mathcal{E}^{l}).
\end{equation}

Since the support of $\mathcal{C}_{x_{l}}(\mathcal{E}^{l})\rho_{0}^{l}-\eta_{x_{l}}^{l}\rho_{x_{l}}^{l}$ is contained in that of $\rho_{0}^{l}$, the projection operator $\Pi(\mathcal{E}^{l})$ can be decomposed into the projection operator $\Pi_{x_{l}}(\mathcal{E}^{l})$ and another projection operator orthogonal to it.
In other words, $\Pi(\mathcal{E}^{l})-\Pi_{x_{l}}(\mathcal{E}^{l})$ becomes a projection operator orthogonal to $\Pi_{x_{l}}(\mathcal{E}^{l})$, that is,
\begin{equation}\label{eq:otpr}
\big[\Pi(\mathcal{E}^{l})-\Pi_{x_{l}}(\mathcal{E}^{l})\big]\Pi_{x_{l}}(\mathcal{E}^{l})=\mathbb{O}.
\end{equation}
Equation~\eqref{eq:otpr} can be equivalently written as
\begin{equation}\label{eq:pcws}
\Pi(\mathcal{E}^{l})\Pi_{x_{l}}(\mathcal{E}^{l})=\Pi_{x_{l}}(\mathcal{E}^{l}).
\end{equation}
From Eqs.~\eqref{eq:pffc} and \eqref{eq:pcws}, we have
\begin{equation}\label{eq:elpm}
\big[\Pi(\mathcal{E}^{l})-\Pi_{x_{l}}(\mathcal{E}^{l})\big]\Pi_{x_{l}}^{\bot}(\mathcal{E}^{l})
=\Pi(\mathcal{E}^{l})-\Pi_{x_{l}}(\mathcal{E}^{l}).
\end{equation}
Because Eq.~\eqref{eq:elpm} is true for each $l=1,\ldots,L$, this leads us to Eq.~\eqref{eq:lsfc}.
\qedhere
\end{proof}

\begin{proof}[Proof of Eq.~\eqref{eq:lssc}]
Suppose
\begin{equation}\label{eq:oluo}
\Pi(\mathcal{E})E\Pi(\mathcal{E})\bigotimes_{l=1}^{L}\big[
(1-b_{l})\Pi(\mathcal{E}^{l})+b_{l}\Pi_{x_{l}}(\mathcal{E}^{l})\big]=\mathbb{O}
\end{equation}
for all $\vec{b}=(b_{1},\ldots,b_{L})\in\mathbb{Z}_{2}^{L}$ with $\omega_{2}(\vec{b})=1$. 
Equation~\eqref{eq:oluo} implies
\begin{equation}\label{eq:obpi}
\Pi(\mathcal{E})E\Pi(\mathcal{E})
\big[\Pi(\mathcal{E}^{1})\otimes\cdots\otimes\Pi(\mathcal{E}^{l-1})\otimes\Pi_{x_{l}}(\mathcal{E}^{l})\otimes\Pi(\mathcal{E}^{l+1})\otimes\cdots\otimes\Pi(\mathcal{E}^{L})\big]
=\mathbb{O}
\end{equation}
for each $l=1,\ldots,L$ because Eq.~\eqref{eq:obpi} is the case of Eq.~\eqref{eq:oluo} when the $l$th entry of 
$\vec{b}\in\mathbb{Z}_{2}^{L}$ is $1$ and the remaining entries are $0$.
Thus we have
\begin{align}\label{eq:lpsw}
\Pi(\mathcal{E})E\Pi(\mathcal{E})\bigotimes_{l=1}^{L}\big[\Pi(\mathcal{E})-\Pi_{x_{l}}(\mathcal{E}^{l})\big]
=\Pi(\mathcal{E})E\Pi(\mathcal{E})&
\Big[\big(\Pi(\mathcal{E})-\Pi_{x_{1}}(\mathcal{E}^{1})\big)\otimes\Pi(\mathcal{E}^{2})\otimes\cdots\otimes\Pi(\mathcal{E}^{L})\Big]\times\cdots\nonumber\\[-1ex]
&\times\Big[\Pi(\mathcal{E}^{1})\otimes\cdots\otimes\Pi(\mathcal{E}^{L-1})\otimes\big(\Pi(\mathcal{E})-\Pi_{x_{L}}(\mathcal{E}^{L})\big)\Big]\nonumber\\
=\Pi(\mathcal{E})E\Pi(\mathcal{E})&,
\end{align}
where the first equality follows from Eq.~\eqref{eq:pcws} and the second equality uses Eq.~\eqref{eq:obpi} together with $\Pi(\mathcal{E})=\bigotimes_{l=1}^{L}\Pi(\mathcal{E}^{l})$, which follows from Eq.~\eqref{eq:aqsd}.
Equation~\eqref{eq:lpsw} completes the proof of Eq.~\eqref{eq:lssc}.

Now, we show the validity of Eq.~\eqref{eq:oluo}.
Since every element of $\mathbb{M}_{\vec{c}}(\mathcal{E})$ in Eq.~\eqref{eq:qmxe} is positive semidefinite, it follows from $E\in\mathbb{M}_{\vec{c}}(\mathcal{E})$ and the definition of $\Pi(\mathcal{E})$ that $\Pi(\mathcal{E})E\Pi(\mathcal{E})$ is also positive semidefinite.
For any $(b_{1},\ldots,b_{L})\in\mathbb{Z}_{2}^{L}$, the operator
\begin{equation}\label{eq:psto}
\bigotimes_{l=1}^{L}\big(\mathcal{C}_{x_{l}}(\mathcal{E}^{l})\rho_{0}^{l}+(-1)^{b_{l}}\eta_{x_{l}}^{l}\rho_{x_{l}}^{l}\big)
\end{equation}
is also positive semidefinite because Eq.~\eqref{eq:cxed} implies that $\mathcal{C}_{x_{l}}(\mathcal{E}^{l})\rho_{0}^{l}-\eta_{x_{l}}^{l}\rho_{x_{l}}^{l}$ is positive semidefinite for each $l=1,\ldots,L$.

For each $\vec{b}=(b_{1},\ldots,b_{L})\in\mathbb{Z}_{2}^{L}$ with $\omega_{2}(\vec{b})=1$, the positive-semidefinite operator $\Pi(\mathcal{E})E\Pi(\mathcal{E})$ is orthogonal to that in Eq.~\eqref{eq:psto} because 
\begin{eqnarray}\label{eq:crrr}
0&=&\Tr\big[E\big(\mathcal{C}_{\vec{x}}(\mathcal{E})\rho_{0}-\eta_{\vec{x}}\rho_{\vec{x}}\big)\big]\nonumber\\
&=&\Tr\big[E\Pi(\mathcal{E})\big(\mathcal{C}_{\vec{x}}(\mathcal{E})\rho_{0}-\eta_{\vec{x}}\rho_{\vec{x}}\big)\Pi(\mathcal{E})\big]\nonumber\\
&=&\Tr\big[\Pi(\mathcal{E})E\Pi(\mathcal{E})\big(\mathcal{C}_{\vec{x}}(\mathcal{E})\rho_{0}-\eta_{\vec{x}}\rho_{\vec{x}}\big)\big]\nonumber\\
&\geqslant&\frac{1}{2^{L-1}}\Tr\Big[\Pi(\mathcal{E})E\Pi(\mathcal{E})
\bigotimes_{l=1}^{L}\big(\mathcal{C}_{x_{l}}(\mathcal{E}^{l})\rho_{0}^{l}+(-1)^{b_{l}}\eta_{x_{l}}^{l}\rho_{x_{l}}^{l}\big)\Big]\nonumber\\
&\geqslant&0,
\end{eqnarray}
where the first equality uses $E\in\mathbb{M}_{\vec{c}}(\mathcal{E})$, the second equality follows from the definition of $\Pi(\mathcal{E})$, and the last equality uses the cyclicity of trace.
The inequalities in \eqref{eq:crrr} hold because $\mathcal{C}_{\vec{x}}(\mathcal{E})\rho_{0}-\eta_{\vec{x}}\rho_{\vec{x}}$ admits the representation in Eq.~\eqref{eq:crps} and $\Tr(XY)\geqslant0$ for any positive-semidefinite operators $X$ and $Y$.
Since two positive-semidefinite operators are orthogonal to each other if and only if their product yields to the zero operator, we have
\begin{equation}\label{eq:ogpt}
\Pi(\mathcal{E})E\Pi(\mathcal{E})\bigotimes_{l=1}^{L}\big(\mathcal{C}_{x_{l}}(\mathcal{E}^{l})\rho_{0}^{l}+(-1)^{b_{l}}\eta_{x_{l}}^{l}\rho_{x_{l}}^{l}\big)=\mathbb{O}.
\end{equation}

If $X$ and $\bigotimes_{l=1}^{L}Y_{l}$ are orthogonal positive-semidefinite operators, then $X$ is orthogonal to the projection operator onto the support of $\bigotimes_{l=1}^{L}Y_{l}$. Since the support of $\bigotimes_{l=1}^{L}Y_{l}$ is the tensor product of the supports of $Y_{l}$ for $l=1,\ldots,L$, it follows that $X$ is orthogonal to the tensor product of the projection operators onto the supports of $Y_{l}$ for $l=1,\ldots,L$.
For each $l=1,\ldots,L$, the projection operators onto the supports of $\mathcal{C}_{x_{l}}(\mathcal{E}^{l})\rho_{0}^{l}+\eta_{x_{l}}^{l}\rho_{x_{l}}^{l}$ and $\mathcal{C}_{x_{l}}(\mathcal{E}^{l})\rho_{0}^{l}-\eta_{x_{l}}^{l}\rho_{x_{l}}^{l}$ are $\Pi(\mathcal{E}^{l})$ and $\Pi_{x_{l}}(\mathcal{E}^{l})$, respectively, because $\mathcal{C}_{x_{l}}(\mathcal{E}^{l})\rho_{0}^{l}+\eta_{x_{l}}^{l}\rho_{x_{l}}^{l}$ has the same support as $\rho_{0}^{l}$.
Thus Eq.~\eqref{eq:ogpt} leads us to Eq.~\eqref{eq:oluo}.\qedhere
\end{proof}

 
\end{document}